\tikzstyle{container} = [draw, rectangle, inner sep=0.3cm]
\newcommand{\ie}{\textit{i.e.}}
\newcommand{\dist}{\mathcal{D}}
\newcommand{\supp}{\mathsf{Supp}}
\newcommand{\reals}{\mathbb{R}}
\newtheorem{problem}{Problem}
\newtheorem{remark}{Remark}
\acrodef{pomdp}[POMDP]{partially observable Markov decision process}
\acrodef{mdp}[MDP]{Markov decision process}
\acrodef{asw}[ASW]{Almost-Sure Winning}
\acrodef{cps}[CPSs]{Cyber-Physical Systems}
\acrodef{hmm}[HMM]{hidden Markov model}
\acrodef{des}[DES]{discrete-event system}
 \DeclareMathOperator*{\optmax}{\mathrm{maximize}}
 \DeclareMathOperator*{\optst}{\mathrm{subject\ to}}
  \newcommand{\Expect}{\mathbb{E}}
   \newcommand{\probs}{\mathbb{P}}
\begin{document}

\title{Synthesis of Dynamic Masks for Information-Theoretic Opacity in Stochastic Systems}



\author{Sumukha Udupa}
\affiliation{%
  \institution{Department of Electrical and Computer Engineering, University of Florida}
  \city{Gainesville}
  \country{US}}
\email{sudupa@ufl.edu}

\author{Chongyang Shi}
\affiliation{%
  \institution{Department of Electrical and Computer Engineering, University of Florida}
  \city{Gainesville}
  \country{US}}
\email{c.shi@ufl.edu}

\author{Jie Fu}
\affiliation{%
  \institution{Department of Electrical and Computer Engineering, University of Florida}
  \city{Gainesville}
  \country{US}}
\email{fujie@ufl.edu}








\begin{abstract}
In this work, we investigate the synthesis of dynamic information releasing mechanisms, referred to as ``masks'', to minimize information leakage from a stochastic system to an external observer. Specifically, for a stochastic system, an observer aims to infer whether the final state of the system trajectory belongs to a set of secret states. The dynamic mask seeks to regulate sensor information in order to maximize the observer's uncertainty about the final state, a property known as final-state opacity. While existing supervisory control literature on dynamic masks primarily addresses qualitative opacity, we propose quantifying opacity in stochastic systems by conditional entropy, which is a measure of information leakage in information security. We then formulate a constrained optimization problem to synthesize a dynamic mask that maximizes final-state opacity under a total cost constraint on masking. To solve this constrained optimal dynamic mask synthesis problem, we develop a novel primal-dual policy gradient method. Additionally, we present a technique for computing the gradient of conditional entropy with respect to the masking policy parameters, leveraging observable operators in hidden Markov models.  
To demonstrate the effectiveness of our approach, we apply our method to an illustrative example and a stochastic grid world scenario, showing how our algorithm optimally enforces final-state opacity under cost constraints.

\end{abstract}

\begin{CCSXML}
<ccs2012>
   <concept>
       <concept_id>10002978.10003006.10011608</concept_id>
       <concept_desc>Security and privacy~Information flow control</concept_desc>
       <concept_significance>500</concept_significance>
       </concept>
   <concept>
       <concept_id>10010520.10010553.10003238</concept_id>
       <concept_desc>Computer systems organization~Sensor networks</concept_desc>
       <concept_significance>500</concept_significance>
       </concept>
 </ccs2012>
\end{CCSXML}

\ccsdesc[500]{Security and privacy~Information flow control}
\ccsdesc[500]{Computer systems organization~Sensor networks}

\keywords{Dynamic mask,  opacity, information leakage, hidden Markov model. }

\maketitle

\section{Introduction}
Networked robotics and IoT \ac{cps} have become more prolific and capable in recent years, enabling them to perform tasks in increasingly open environments.  As a result, these systems pose significant security and privacy risks to users. 
Recent studies \cite{JimCottons,ZackWhittaker} have highlighted concerns about inadvertent leakage of location data through widely used mobile applications, such as Strava and Fitbit, which track users' activity and exercise routines. Despite features to obscure details like trip origins and destinations, they can still to expose sensitive information. This is particularly problematic when the users include individuals in high-security roles, such as law enforcement or government personnel. 



Motivated by these security incidents, this paper studies how to synthesize a dynamic sensor information releasing policy, referred to as a \emph{dynamic mask}, to enforce information security and opacity of \ac{cps}  by strategically selecting which sensors can release information to the public or unauthorized observers.  
Opacity has been proposed to generalize secrecy, anonymity, privacy, and other confidentiality properties against attacks \cite{mazare2004using,zeng2019quantitative,watson2012multi,jacob2016overview}. A system is  \emph{opaque} if its secret or private behaviors cannot be reliably deduced by an observer with partial observations.
In this context, we model a stochastic system as a Markov chain, whose state is partially observable to an external intruder (an observer). The system is to optimize a state-based opacity, measured by the uncertainty of the observer about whether the last state of a finite trajectory is in a set of secret states. 
The observer has prior knowledge about the secret based on the knowledge of the system dynamics. To optimize opacity, the system can dynamically change the observer's observation function, by masking sensors accessible to the observer. To illustrate the application scenario, consider a self-driving taxi navigating through an urban environment. If an external observer with access to the traffic camera network can infer the car's final state - whether it has reached a specific destination, then the privacy of the user is compromised. 
However, dynamic masking can be leveraged for privacy protection by determining at runtime which images about the vehicles shall be masked from the camera. 

To this end, we introduce quantitative opacity measured by the conditional entropy of the secret given the observer's partial observations. Since dynamic masking often comes with  data management and operational costs, we are interested in addressing the following question: ``How to synthesize a budget-constrained dynamic masking policy that minimizes information leakage   with respect to a given  secret,  in the presence of a well-informed observer?'' Here, a well-informed observer has full knowledge of the system dynamics, the observation function, and the dynamic mask.


\subsection{Related Works}
Opacity was first introduced by Mazar\'e \cite{mazare2004using}  in the context of cryptographic protocols and has since been expanded to address the security of various types of secrets. For example, state-based opacity ensures that an observer cannot determine whether a secret state has been reached, while language-based opacity prevents the observer from discerning whether the system's execution belongs to a set of secret trajectories \cite{berard2015probabilistic, bryans2008opacity, saboori2010opacity}. In system and control literature,
opacity has been extensively studied in supervisory control of \ac{des}s, with a \emph{qualitative} measure: A system is \emph{qualitatively} opaque if an observer, with partial observations of the system, cannot infer any secret information with certainty \cite{jacob2016overview, saboori2010verification, saboori2011opacity, wu2013comparative, yin2019infinite}. In an opaque system, a trajectory satisfying the secret property must be observation-equivalent to a trajectory that violates it.
For stochastic systems, existing work employs probabilistic opacity, which measures the probability that a secret may be disclosed \cite{saboori2010opacity, saboori2013current, keroglou2018probabilistic} or quantifies the security level of a system by the probability of generating an opaque trajectory \cite{berard2015probabilistic, udupa2024planning}. A comprehensive review of various notions of opacity and their enforcement techniques is provided \cite{jacob2016overview}. 

The enforcement of qualitative opacity in \ac{des}s can be broadly classified into two approaches:    control design to restrict system behavior or dynamic information releasing to limit the observer's observations. In the first approach, opacity is enforced by designing a supervisory controller \cite{saboori2010verification} or by solving an opacity-enforcement game \cite{helouet2018opacity,udupa2023opacity}. 
However, it may not always be feasible or practical to modify or restrict the system's behavior. For example, in the case of mobile applications unintentionally disclosing personal information, it is not practical to restrict a user's movement or prevent them from visiting certain locations. Therefore, an alternative approach explored in the literature involves altering the public observations relayed to the observer. 

This alternative line of work is most closely related to our work and focuses on enforcing opacity by restricting/altering the observer's observations. In  \cite{cassez2012synthesis}, the authors introduce the concept of a ``mask''   to limit the system's observable outputs, either statically or dynamically. A dynamic mask changes   the external observer's observation mapping at each execution step for opacity enforcement. Dynamic masks have been designed to enforce current-state opacity \cite{cassez2012synthesis}   and infinite-state opacity\cite{yin2019infinite}. Alternatively,  the work\cite{ji2018enforcement,ji2019opacity} developed  selective insertion and/or deletion of output observations for opacity enforcement. 

A dynamic mask can be viewed as an information-flow control mechanism that  regulates the information to the observer by   enabling or disabling the associated sensors \cite{yin2019synthesis}. Closely related to dynamic masking,  dynamic sensor activation has been extensively studied in the context of \ac{des}s \cite{sears2016minimal,yin2019general}, for  fault diagnosis \cite{cassez2008fault,thorsley2007active,wang2010minimization} and detectability \cite{shu2013online}. 
The work \cite{zhang2015maximum} studied the problem of maximum information release while ensuring the opacity of \ac{des}. More recently, the authors \cite{wintenberg2022dynamic} presented a dynamic obfuscation framework that enables an intended recipient to infer sensitive information while preventing unintended recipients from making similar deductions.  

Building on the insight from qualitative opacity enforcement using dynamic masks and dynamic sensor activation, our work distinguishes itself by focusing on \emph{quantitative, information-theoretic} definitions of opacity \cite{shi2024information} and optimal opacity enforcement under cost constraints. Specifically, we use conditional entropy as a measure of information leakage, subject to a generic cost associated with masking. The use of conditional entropy as the measure of information leakage allows us to have a symmetric notion of opacity, \ie, the opacity will be minimal when the observer is consistently confident that the agent either visited or avoided the secret states given an observation. 
This sets our approach apart from qualitative opacity-enforcement approaches \cite{cassez2012synthesis,yin2019general}. 

In addition to  opacity,  differential privacy has been widely studied in \ac{cps} \cite{hassan2019differential}. It protects privacy by adding calibrated noise, balancing privacy and data accuracy. 
Our approach differs fundamentally from differential privacy in that it uses different measures of information leakage and leverages existing   noises in the stochastic environment dynamics and imperfect observation channels for opacity, rather than introducing noise to the system.


\subsection{Contributions}
Our contributions to this work are as follows:
\begin{enumerate}
    \item We introduce conditional entropy as a novel measure of information leakage regarding the secret property of a stochastic system, given an observer's partial observations. This metric quantifies the uncertainty about the secret state, providing a framework for analyzing the opacity of a stochastic system or an observer with imperfect observations. 
    \item We propose a new method to compute the approximate gradient of conditional entropy with respect to the parameters of a dynamic masking policy. This method leverages the observable operators in \ac{hmm}s. Using the computed gradient, we formulate the problem of enforcing maximal opacity with dynamic masking under cost constraints as a constraint optimization problem. We then employ primal-dual gradient-based optimization to compute a locally optimal policy that maximizes opacity while satisfying the cost constraint on the dynamic masks.
     \item We demonstrate the effectiveness of our proposed algorithm through experiments and share our insight into how the dynamic mask protects the opacity in a stochastic system.
\end{enumerate}

\section{Preliminaries and Problem Formulation}
\label{sec:preliminaries_and_problem_formulation}

\paragraph{Notations} Given a finite set $Z$, let $\dist(Z)$ be the set of all probability distributions over $Z$. Given a distribution $d\in \dist (Z)$, let $\supp(d) = \{z\in Z\mid d(z)>0\} $ be the support of this distribution. The set $Z^T$ denotes the set of sequences with length $T$ composed of elements from $Z$, $Z^{\le  T}$ denotes the set of sequences with length $\le T$ and $Z^\ast$ denotes the set of all finite sequences generated from $Z$. $\mathbb{R}$ and $\mathbb{N}$ represent the real and natural numbers respectively. $\mathbb{P}$ represents the probability measure. We denote random variables by capital letters and their realization by lowercase \ie, $X$ and $x$. The sequence of random variables and their realizations of length $T$ are denoted as $X_{[0:T]}$ and $x_{[0:T]}$ respectively.

We consider the interaction between two agents in a stochastic system: an observer and a masking agent. The observer aims to infer,  from his partial observation of the system,  if the last state of a finite trajectory is a secret.
A collection of sensors generates partial observations. The masking agent aims to determine dynamically what sensor readings can be released or masked to the observer, such that the observer is maximally uncertain about the secret.

The system under partial observations is modeled by an \ac{hmm} with a controllable emission function. 
\[
M = \langle S,  P,  O,  \Sigma, \mu_0, \sigma_0, E \rangle
\]
where
\begin{itemize}
    \item $S$ is a finite set of states;
     \item $P:S \rightarrow \dist(S)$ is the probabilistic transition function such that for each state $s,s'\in S$, $P(s'\mid s)$ is the probability of reaching $s'$   from  state $s$; 
      \item $O $ is the finite set of observations;
      \item $\Sigma$ is a finite set of sensor configurations;
      \item $\mu_0$ is the initial state distribution, such that $\mu_0(s)$ represents the probability of $s$ being the initial state;
      \item $\sigma_0$ is the initial sensor configuration;
   \item $E: S\times   \Sigma \rightarrow \dist(O)$ is the emission function that maps each state and a sensor configuration to a distribution over observations. The emission function is \emph{controllable} as the system can determine the sensor configuration.
\end{itemize}

 A randomized \emph{state-based, Markovian dynamic mask} is a function $\pi: S \times\Sigma \rightarrow \dist(\Sigma)$
  that maps a current state $s$, a current sensor configuration $\sigma$, to a distribution over next sensor configurations. 
  We also refer to the set $\Sigma$ of sensor configurations as \emph{the set of masking actions}.
  
A dynamic mask influences the observation of a state by changing the sensor configuration. It may determine which sensors can release information and which sensors cannot.
  A dynamic mask is associated with a cost of masking: $C: S \times \Sigma \times \Sigma \rightarrow \reals$ that maps a state, the current sensor configuration, and the next sensor configuration to a non-negative cost $C(s,\sigma,\sigma')\ge 0$.

  Consider a fixed dynamic mask $\pi$, the system starts with the initial state $s_0\sim \mu_0$ and initial sensor configuration $\sigma_0$; the observer receives initial observation $o_0$ with probability $E(o_0|s_0,\sigma_0)$.
  At the next time step,  the state $s_1$ is reached with probability $P(s_1|s_0)$, and the sensor configuration $\sigma_1$ is selected with probability $\pi(\sigma_1| s_0,\sigma_0)$. 
  Given the state $s_1$ and sensor configuration $\sigma_1$, the observation $o_1$ is emitted with probability $E(o_1|s_1,\sigma_1)$. This iteration continues for a finite horizon $T$. By the end of the horizon, the state trajectory is $s_0s_1\ldots s_T$, the observation sequence is $o_0o_1\ldots o_T$ and the sequence of sensor configurations under the dynamic mask is $\sigma_0\sigma_1\ldots \sigma_T$.

  \begin{remark}
      Note that our definition of the observations and emission function is generic to handle cases of masking actions visible and invisible to the observer. In the case of masking actions visible to the observer, the observations include current masking action. 
  \end{remark}

We introduce entropy as a measure of the observer's uncertainty regarding the secret.
 The entropy of a random variable $X$ with a countable support $\mathcal{X}$ and a probability mass function $p$ is 
\[
H(X) = -\sum_{x\in \mathcal{X}}p(x)\log p(x).
\]
The higher the entropy the larger the level of uncertainty in the random variable $X$.
Note that the $\log$ is the base-two logarithm.

The conditional entropy of $X_2$ given $X_1$ is 
\[
H(X_2\mid X_1) = -\sum_{x_1\in \mathcal{X}}\sum_{x_2 \in \mathcal{X}}p(x_1,x_2)\log p(x_2\mid x_1).
  \]

Next, we introduce a definition of state-based opacity based on quantitative information leakage \cite{smithFoundationsQuantitativeInformation2009,khouzaniLeakageMinimalDesignUniversality2017} in information security.

Given an \ac{hmm} with controllable emission function $M$, 
a dynamic mask $\pi$ induces a stochastic process $M^\pi\coloneqq \{S_t, \Sigma_t, O_t, t\ge 0\}$ where $S_t$, $\Sigma_t$, and $O_t$  are the state, sensor configuration, and observation at time $t$. Let $\probs^\pi$ be the probability measure induced by dynamic mask $\pi$.
\begin{definition}[State-based final state opacity]
   Given the stochastic process $M^\pi \coloneqq \{S_t, \Sigma_t, O_t, t\ge 0\}$, a finite horizon $T>0$, and a set $G$ of secret states, for any $0\leq t\leq T$, let the random variable $W_t$ be defined by 
    \[
    W_t = \mathbf{1}_G(S_t).
    \]
    That is, $W_t$ is the Boolean random variable representing if the state $S_t$ at time $t$   is in the secret set $G$.
The \emph{quantitative final-state opacity} under dynamic mask $\pi$ given horizon $T$ is defined by 
\begin{multline}
H(W_T|O_{0:T};  \pi) =\\
-\sum_{W_T \in \{0,1\}} \sum_{o_{0:T}\in O^T }  \probs^\pi (W_T,o_{0:T})  \cdot \log \probs^\pi (W_T\mid o_{0:T} ),
\end{multline}
 which is the conditional entropy of  $W_T$ given     observation $O_{0:T}$. 
\end{definition}

\begin{remark}
Traditional supervisory control employs the notion of qualitative current (or final) state opacity \cite{saboori2007notions, saboori2011opacity}, which stipulates that the observation of a trajectory satisfying the secret property (\ie, the final state is in the set $G$) is observation-equivalent to a trajectory that does not satisfy it (\ie, the final state is not in the set $G$).  Equivalent non-secret behavior ensures that the observer cannot distinguish whether the current (or final) state belongs to the set of secret states. In contrast, quantitative opacity offers a symmetrical measure that quantifies the observer's confidence in identifying whether the agent is in a secret state. This is achieved by comparing the observer's ability to infer the secret before and after analyzing the available observations.
\end{remark}

We can now formulate our problem as a problem of designing a dynamic masking policy to maximize final state opacity.


\begin{problem}[Cost-constrained optimal dynamic mask for final-state opacity]
\label{problem:1}
    Given the \ac{hmm} $M$, a set of secret states $G$,  a finite horizon $T$, compute a dynamic mask $\pi$ that maximizes the conditional entropy $H(W_T\mid O_{0:T}; \pi)$ of the random variable $W_T$ given the observation sequence $O_{0:T}$, while ensuring the total expected discounted cost of masking does not exceed a given threshold $\epsilon$.
    \begin{align}
        \label{eq:optimization-mask}
& \optmax_\pi \quad
H(W_T|O_{0:T};  \pi)  \nonumber \\
\optst & \quad \sum_{t=0}^{T-1} \Expect_\pi \left[ \gamma^t C(S_t,\Sigma_t,\Sigma_{t+1})\right] \le \epsilon.
    \end{align}
    where $\gamma \in [0,1]$ is the discounting factor and the expectation is taken with respect to the mask $\pi$-induced stochastic process $M^\pi$.
\end{problem}
In other words, despite knowing the hidden Markov model and the dynamic mask, the observer is maximally uncertain regarding whether the final state is a secret state.  While achieving the maximum opacity through dynamic masking, it is also required for the masking agent to ensure that the total expected discounted cost of masking does not exceed the given budget (\ie, no more than $\epsilon$).

We use the following running example to illustrate our definitions and problem formulation.

\begin{example}[Part I]
    \label{ex: 1}
    Consider the \ac{hmm} in Figure~\ref{fig:running_example} with 7 states, $s_0$ through $s_6$ with the initial state $s_0$ and a sink state $s_5$. Here,  the set of secret states is $G=\{s_4,s_6\}$. The transitions are labeled with the transition probabilities. 
   Observations are obtained through the sensors deployed in the environment. Sensors $R$, $G$, $P$, and $B$ monitor the states $s_1$, $s_3$, $s_4$, and $s_6$ respectively. The sensors relay the sensor labels ($R$, $G$, $P$, or $B$) if the current state is in the coverage of the respective sensor and a null observation `0' otherwise. The sensors have a false negative rate of $0.15$ and a false positive rate of $0$. The masking agent can mask at most one sensor at a time. The masking action is visible to the observer. 
   Thus, the set of sensor configurations can be defined by $\Sigma = \{R,G, P, B, N\}$ where configuration $X \in \{R,G,P,B\}$ means that sensor $X$ is masked, and configuration $N$ means none of the sensors are masked. 
  The product of sensor labels and masking actions form the set of observations. 

  In this example, we consider the initial masking configuration to be $N$, that is, no masking. We consider the following cost structure for masking the sensors. 
  Let $\phi:\Sigma\to \reals$ be a mapping from sensor configurations to a real number cost. In this example we have, $\phi(sensor) = 10$ for $sensor \in \{R, G, P\}$, and $\phi(B) = 30$, then,
    \[
     C(s,\sigma, \sigma') = \begin{cases}
         \phi(\sigma') &\text{if } \quad \sigma \neq \sigma', \sigma'\in \Sigma \setminus\{N\},\\
         \frac{\phi(\sigma')}{2} & \text{if } \quad \sigma = \sigma', \sigma' \in \Sigma \setminus \{N\},\\
       0& \text{ otherwise.}
        \end{cases}
  \]

  Intuitively, masking any sensor other than sensor $B$ costs $10$, masking sensor $B$ costs $30$, and continuing to mask the same sensor that was masked at the previous time step costs half the set cost. Finally, not masking any sensor has no cost associated with it.
    
The cost threshold is given by $\epsilon$. In this illustrative example, we consider a horizon $T=2$. For instance, consider the dynamic mask $\pi(s_0) = R, \pi(s_1) = P, \pi(s_2) = N, \pi(s_3) = B$. A  trajectory in this HMM, for example, $s_0 s_3 s_6$, can generate an observation sequence to P2, $(\text{`0'}, N) (G, R) (\text{`0'}, B)$ with some positive probability. At state $s_0$ and action $R$,  The observation $(\text{`0'}, N)$ means that a null observation is obtained, and none of the sensors are masked.

\end{example}

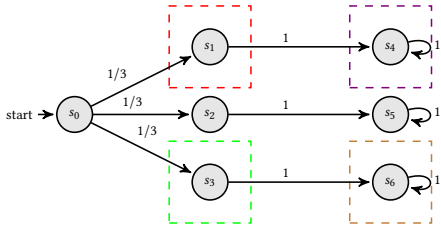
\begin{figure}[h]
    \centering   
   \begin{tikzpicture}[->,>=stealth',shorten >=1pt,auto,node distance=2.5cm,scale=0.6,semithick, transform shape]
    \tikzstyle{every state}=[fill=black!10!white];

    \node[state, initial] (0) at (0, 0.5) {$s_0$};
    \node[state] (1) at (3, 2) {$s_1$};
    \node[state] (2) at (3, 0.5) {$s_2$};
    \node[state] (3) at (3, -1) {$s_3$};
    \node[state] (4) at (7, 2) {$s_4$};
    \node[state] (5) at (7, 0.5) {$s_5$};
    \node[state] (6) at (7, -1) {$s_6$};

    \node [container,fit=(1),draw=red,dashed,line width=0.2mm ] (container) {};
    \node [container,fit=(3),draw=green,dashed,line width=0.2mm ] (container) {};
    \node [container,fit=(4),draw=violet,dashed,line width=0.2mm ] (container) {};
    \node [container,fit=(6),draw=brown,dashed,line width=0.2mm ] (container) {};

      \path 
        (0) edge []  node[pos=0.4]{$1/3$} (1)
        (0) edge []  node[pos=0.4]{$1/3$} (2)
        (0) edge []  node[pos=0.4]{$1/3$} (3)

        (1) edge []  node[pos=0.4]{$1$} (4)

        (2) edge []  node[pos=0.4]{$1$} (5)

        (3) edge []  node[pos=0.4]{$1$} (6)

        (4) edge [loop right]  node[pos=0.4]{$1$} (4)
        (5) edge [loop right]  node[pos=0.4]{$1$} (5)
        (6) edge [loop right]  node[pos=0.4]{$1$} (6)

        ;

\end{tikzpicture} 
    \caption{\ac{hmm} for the illustrative example. The colored boxes represent the sensors in the environment. Red - sensor $R$, Green - sensor $G$, Purple - sensor $P$, and Brown - sensor $B$.}
    \label{fig:running_example}
\end{figure}

\section{Main Results}


    

\subsection{Synthesizing A Dynamic Mask}

We first show that the planning problem with a dynamic mask can be formulated as the following \ac{mdp}.
\begin{definition}
   The \ac{mdp} for the dynamic mask is a tuple
    \[
     \mathcal{M} = \langle \mathcal{Z}, \Sigma,\mathcal{O}, \mathbf{P},\mathbf{E},\mathcal{C} ,\mu_0 \rangle
    \]
    where,
    \begin{itemize}
    
        \item $\mathcal{Z} =S\times \Sigma$ is the set of states.
        \item $\Sigma$ is the masking action space, \ie, the set of all possible sensor configurations. 
        \item $\mathcal{O}$ is the finite set of observations.
        \item $\mathbf{P}:\mathcal{Z}\times \Sigma \to \dist(\mathcal{Z})$ is the probabilistic transition function, defined as follows:  for every pair of states   $z=(s,\sigma)$ and $z'=(s', \sigma')$, and a masking action $\sigma^o \in \Sigma$,
   \[     \mathbf{P}(z'\mid z, \sigma^o) =  \begin{cases}
         P(s'\mid s) &\text{if }\sigma' = \sigma^o\\
       0& \text{ otherwise.}
        \end{cases}
       \]

        \item $\mathbf{E}:\mathcal{Z}\to \dist(\mathcal{O})$ is the emission function (observation function) that maps a state $z=(s, \sigma)$ to a distribution over $o\in \mathcal{O}$ given by
        \[
        \mathbf{E}(z) = E(s, \sigma).
        \]

        \item $\mathcal{C}:\mathcal{Z}\times \Sigma\to \mathbb{R}$ is the transition-based cost function that defines the cost of taking a masking action at a state and is given by
        \[
        \mathcal{C}(z, \sigma') = C(s, \sigma, \sigma'),
        \]
        for every $z=(s, \sigma)\in \mathcal{Z}$ and $\sigma'\in \Sigma$.

        \item $Z_0$ is the initial state, which is a random variable with the following distribution.
        \[
        \mathbb{P}(Z_0 = (s,\sigma_0)) = \mu_0(s),
        \]
        for each $s \in S$ and $\sigma_0$ the initial sensor configuration. And with a slight abuse of notation, we use $\mu_0(z)$ to represent the probability that $z$ is the initial state.

    \end{itemize}
\end{definition}

 P1's value function in the above \ac{mdp}, given a dynamic mask $\pi$, is $V^\pi:\mathcal{Z}\to \reals$, which is defined as follows: for any $z$,  
\[
V^\pi(z) = \mathbb{E}_\pi[\sum_{k=0}^\infty \gamma^k \mathcal{C}(Z_k, \pi(Z_k))\mid Z_0=z],
\]
where $\mathbb{E}_\pi$ is the expectation with respect to the probability distribution induced by the dynamic mask $\pi$ from $\mathcal{M}$. And $Z_k$ is the $k-$th state in the Markov chain induced by the dynamic mask from $\mathcal{M}$.

\begin{example} [Part II]
    Continuing Example \ref{ex: 1}, a fraction of the \ac{mdp} $\mathcal{M}$ is shown in Figure~\ref{fig:running_example_mdp}. The states in $\mathcal{M}$ are the states of \ac{hmm} augmented with the current masking configuration of sensors, \ie, in state $(s_0, N)$, the first element represents the state and the next element the masking action. The transitions are labeled with the next masking action and the transition probability. The observation distribution at each state is obtained given the augmented state. For instance, at the state $(s_1, R)$,  P2 has a null observation with probability $1$ (\ie, (`0', $R$) with probability 1), and all other observations with probability $0$. Likewise, at the state $(s_3, R)$, P2 has a null observation (\ie (`0', $R$)) with probability $0.15$ and the observation $G$ with probability $0.85$ (\ie, ($G$, $R$)). 

\end{example}

\begin{figure}[h]
    \centering   
   \begin{tikzpicture}[->,>=stealth',shorten >=1pt,auto,node distance=2.5cm,scale=0.6,semithick, transform shape]
    \tikzstyle{every state}=[fill=black!10!white];

    \node[state, initial] (0) at (0, 0.5) {$(s_0, N)$};
    \node[state] (1) at (3, 2) {$(s_1, R)$};
    \node[state] (2) at (3, 0.5) {$(s_2, R)$};
    \node[state] (3) at (3, -1) {$(s_3, R)$};
    \node[state] (4) at (7, 2) {$(s_4, P)$};
    \node[state] (5) at (7, 0.5) {$(s_5, B)$};
    \node[state] (6) at (7, -1) {$(s_6, P)$};


      \path 
        (0) edge []  node[pos=0.4, sloped]{$\sigma = R$, $(1/3)$} (1)
        (0) edge []  node[pos=0.4, sloped]{$\sigma = R$, $(1/3)$} (2)
        (0) edge []  node[pos=0.4, sloped]{$\sigma = R$, $(1/3)$} (3)

        (1) edge []  node[pos=0.4, sloped]{$\sigma=P$, $(1)$} (4)

        (2) edge []  node[pos=0.4, sloped ]{$\sigma=B$, $(1)$} (5)

        (3) edge []  node[pos=0.4, sloped]{$\sigma=P$, $(1)$} (6)

        (4) edge [loop right]  node[pos=0.4]{$\sigma=P$, $(1)$} (4)
        (5) edge [loop right]  node[pos=0.4]{$\sigma=B$, $(1)$} (5)
        (6) edge [loop right]  node[pos=0.4]{$\sigma=P$, $(1)$} (6)

        ;

\end{tikzpicture}
    \caption{A fragment of the \ac{mdp} for the \ac{hmm} in Ex.~\ref{ex: 1}.}    
    \label{fig:running_example_mdp}
\end{figure}
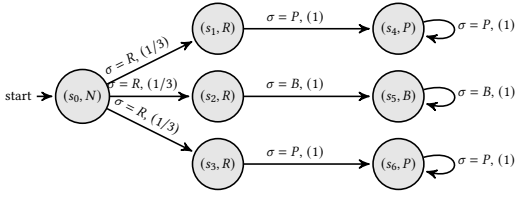

\subsection{Primal-Dual Policy Gradient}
For notational convenience, we introduce an index for the finite-state set $\mathcal{Z}=\{1,\cdots, N\}$, where $N = |S\times \Sigma|$. Consider a set of parameterized dynamic masks $\{\pi_{\theta}\mid \theta \in \Theta\}$ where $\Theta$ is a finite-dimensional parameter space. For any dynamic mask $\pi_\theta$ parameterized by $\theta$, the Markov chain induced by $\pi_\theta$ from $\mathcal{M}$ is denoted $\mathcal{M}^{\pi_\theta}:\{Z_t, O_t, t\geq 0\}$. The Markov chain follows the transition probability $\probs_\theta$ such that the probability of transitioning from state $z$ to $z'$ in one step  is defined by \begin{equation}
    \label{eq:trans}\probs_\theta(z,z')=\sum_{\sigma'\in \Sigma} \mathbf{P}(z'\mid z, \sigma') \pi_\theta(\sigma'\mid z).
\end{equation}

Thus, Problem \ref{problem:1} can now be formally formulated as a constrained optimization problem under the above parametrization as follows:

\begin{equation}
\begin{aligned}
&\optmax_\theta && H(W_T\mid Y; \theta) \\ 
&\optst: && V(\mu_0, \theta) \leq \epsilon.
\end{aligned}
\label{eq:problem_1}
\end{equation}
where $Y=O_{0:T}$ is the finite  sequence of observations. The value function $V(\mu_0,\theta)$ is computed by evaluating the dynamic mask $\pi_\theta$ given the initial state distribution $\mu_0$, \ie, $V(\mu_0, \theta):= V^{\pi_\theta}(\mu_0)$, and $H(W_T\mid Y;\theta):= H(W_T\mid Y;\mathcal{M}^{\pi_\theta})$.

We then formulate this problem with inequality constraints into an unconstrained optimization problem by introducing the following Lagrangian $L(\theta, \lambda)$, and thereby incorporating the constraint into the objective function:
\begin{equation}
\label{eq:lagrangian}
    L(\theta, \lambda) = H(W_T\mid Y; \theta) + \lambda (\epsilon - V(\mu_0,\theta)),
\end{equation}
where $\lambda \geq 0$ is the Lagrange multiplier.

With this, we can now express the original constrained optimization problem as the following max-min problem:
\begin{equation}
    \label{eq:max_min}
    \max_{\theta} \min_{\lambda \geq 0} L(\theta, \lambda).
\end{equation}

We use the primal-dual gradient descent-ascent algorithm such that in each iteration $k$, we have, 

\begin{equation}
\label{eq:primal_dual_gradient_descent_ascent}
    \begin{aligned}
            \theta_{k+1} = \theta_k + \eta \nabla_{\theta} L(\theta, \lambda),\\
        \lambda_{k+1} = \lambda_k - \kappa (\epsilon - V(\mu_0,\theta)),
    \end{aligned}
\end{equation}
where $\eta>0$ and $\kappa>0$ are step sizes. The gradient of the Lagrangian function with respect to $\theta$ is 
\begin{equation}
    \label{eq:gradient_lagrangian}
    \nabla_{\theta} L(\theta, \lambda) = \nabla_{\theta}H(W_T\mid Y;\theta) - \lambda \nabla_{\theta} V(\mu_0,\theta).
\end{equation}

The gradient of the value function can be computed using the standard sampling-based policy gradient algorithms like REINFORCE \cite{sutton1999policy}. 
But, to obtain the gradient of the Lagrangian function, we also need to compute the gradient of the conditional entropy with respect to the dynamic mask parameter $\theta$. As entropy is non-causal and non-cumulative, it cannot be written as a cumulative sum of terms. We thus present a novel method to compute the gradient $\nabla_{\theta}(H(W_T\mid Y; \theta))$.

\subsection{Gradient Computation for the Conditional Entropy}
 Consider a  dynamic mask  $\pi_\theta$,   from the observer's perspective the stochastic process is an \ac{hmm} $\mathcal{M}_0^\theta = (\mathcal{Z},\mathcal{O}, \probs_\theta, \mathbf{E}, \mu_0)$, where $\mathcal{Z}$ is the state space, $\mathcal{O}$ is the observation space, $\probs_\theta$ is the transition function induced by $\pi_\theta$ (\eqref{eq:trans}) and $\mathbf{E} = \{e_i, i\in \mathcal{Z}\}$ is the emission probability distribution where $e_i(o) = E(s_i, \sigma_i)$ for $z_i=(s_i, \sigma_i)$. 

We can now re-write the conditional entropy of $W_T$ given an observation sequence $Y$ as follows:
\begin{equation}
    \label{eq:parametrized_conditional_entropy}
    H(W_T\mid Y; \theta) = -\sum_{y\in \mathcal{O}^T} \sum_{w_T\in \{0,1\}} \probs_\theta(w_T,y) \log \probs_\theta (w_T\mid y).
\end{equation} 
 
Now, we compute the gradient of $H(W_T\mid Y; \theta)$ with respect to the dynamic mask parameter $\theta$. Using the log-derivative trick $\nabla_\theta \probs_\theta(y) = \probs_\theta(y) \nabla_\theta \log \probs_\theta(y)$ and the property of conditional probability, we have


\begin{equation}
\begin{aligned}
\label{eq:conditional_entropy_gradient_calculation}
 &\nabla_\theta H(W_T\mid Y; \theta) \\
= & - \sum_{y \in \mathcal{O}^T} \sum_{w_T \in \{0,1\}} \Big[\nabla_\theta \probs_\theta(w_T, y) \log \probs_\theta(w_T \mid y) \\
&+  \probs_\theta(w_T, y)\nabla_\theta \log \probs_\theta(w_T\mid y)\Big] \\
= & - \sum_{y \in \mathcal{O}^T} \sum_{w_T \in \{0,1\}} \Big[\probs_\theta(y) \nabla_\theta \probs_\theta(w_T\mid y) \log \probs_\theta(w_T \mid y) \\ 
& + \probs_\theta(w_T\mid y) \frac{\probs_\theta(y)}{\probs_\theta(y)} \nabla_\theta \probs_\theta(y) \log \probs_\theta(w_T \mid y) 
\\ &+ \probs_\theta(y)\frac{\nabla_\theta \probs_\theta(w_T \mid y)}{\log 2}\Big] \\
 = & - \sum_{y \in \mathcal{O}^T} \probs_\theta (y) \sum_{w_T \in \{0,1\}} \Big[ \log \probs_\theta(w_T \mid y) \nabla_\theta \probs_\theta(w_T\mid y) \\
& + \probs_\theta(w_T\mid y) \log \probs_\theta(w_T \mid y) \frac{\nabla_\theta \probs_\theta(y)}{\probs_\theta(y)} 
+ \frac{\nabla_\theta \probs_\theta(w_T \mid y)}{\log 2} \Big].
\end{aligned}
\end{equation}

We now present a novel method to compute $\probs_\theta(y)$, $\probs_\theta(w_T\mid y)$, and the gradients $\nabla_\theta \probs_\theta(y)$, $\nabla_\theta \probs_\theta(w_T\mid y)$  for $w_T \in \{0,1\}$ using observable operators. 
 


We first introduce the matrix notation for \ac{hmm}s. We can write the  \emph{reversed} state transition probability matrix as  $T_\theta\in \mathbb{R}^{N\times N},$  with 
$$T_\theta[i,j] =  \probs_\theta(\mathcal{Z}_{t+1}=i\mid \mathcal{Z}_t=j), $$ 
and the observation probability matrix $B \in \mathbb{R}^{M\times N}$, where $M$ is the number of all possible observations, with 
\[B[i,j] = \mathbf{E}(O_t=i\mid \mathcal{Z}_t=j),
\]
where $O_t$ and $\mathcal{Z}_t$ represent the observation and state at time $t$. 

Thus, we have the probability distribution of any sequence of states and observations fully characterized by $T_\theta$, $B$, and $\mu_0$.


\begin{definition}[Observable operator(\cite{jaeger2000observable})]
    Given the \ac{hmm} $\mathcal{M}_0^\theta$,   for any observation at time $t$ given as $o_t$, the observable operator $A_{o_t}^\theta$ is defined as a matrix of size $N\times N$ with its $ij$-th component given as
    \[
    A_{o_t}^\theta[i,j] = \probs_\theta(\mathcal{Z}_{t+1}=i\mid \mathcal{Z}_t=j)\cdot \mathbf{E}(O_t=o_t\mid \mathcal{Z}_t=j),
    \]
    which is the probability of transitioning from state $j$ to $i$ and the probability of emitting an observation $o_t$ at the state $j$. This can further be expressed in the matrix representation as follows
    \begin{equation}
    \label{eq:observable_operators_matrix}
    A_{o_t}^\theta = T_\theta \text{diag}(B_{o_t,1},\cdots,B_{o_t,N}).
    \end{equation}

\end{definition}



Then, the probability of a sequence of observations $o_{[1:T]}$ given a dynamic mask $\pi_\theta$ is shown to be written as a matrix operation of observable operators as follows:
\begin{equation}
    \label{eq:probability_of_observation_sequence}
    \probs_\theta(Y=o_{1:T}) =\mathbf{1}_N^T A_{o_T}^\theta\cdots A_{o_1}^\theta\mu_0 = \mathbf{1}_N^TA_{o_{T:1}}^\theta\mu_0,
\end{equation}
where $\mathbf{1}_N$ denotes an $N\times 1$ column vector of ones, and $\mathbf{1}_N^T$ its transpose. Here, $A_{o_t}^\theta$ incorporates information about one-step observation likelihoods and hidden state transitions for any $1\leq t\leq T$.

For the gradient computation of the conditional entropy, we require $\nabla_\theta \probs_\theta(y)$. This gradient can be calculated using the product rule and the fact that only the observable operators depend on the mask parameter.

\begin{equation}
\begin{aligned}
\label{eq:gradient_of_P_y}
&\nabla_\theta \probs_\theta(y) = \nabla_\theta (\mathbf{1}_N^T A_{o_{T:1}}^\theta \mu_0)\\
&= \mathbf{1}_N^T \nabla_\theta (A_T^\theta A_{T-1}^\theta \cdots A_1^\theta) \mu_0\\
&= \mathbf{1}_N^T \Big(\sum_{k=1}^T (A_T^\theta A_{T-1}^\theta \cdots \frac{\partial A_k^\theta}{\partial \theta} \cdots A_1^\theta) \Big)\mu_0\\
&= \mathbf{1}_N^T \Big(\sum_{k=1}^T (A_T^\theta \cdots \frac{\partial}{\partial \theta}(T_\theta \text{diag}(B_{o_k,1},\cdots, B_{o_k,N})) \cdots A_1^\theta) \Big)\mu_0\\
&= \mathbf{1}_N^T \Big(\sum_{k=1}^T (A_T^\theta \cdots \frac{\partial T_\theta}{\partial\theta_i}\text{diag}(B_{o_k,1},\cdots, B_{o_k,N}) \cdots A_1^\theta) \Big)\mu_0,
\end{aligned}
\end{equation}
where $\frac{\partial T_\theta}{\partial\theta_i}$ represents the gradient of the tensor $T_\theta$ with respect to each parameter in $\theta$.

Next, to compute $\probs_\theta(w_T\mid y)$, we first compute the joint distribution of a sequence of observations with the hidden state.

\begin{proposition}    \label{prop:modified_matrix_mult}
Given the initial distribution $\mu_0$,
the joint probability of observing $o$ and arriving at next hidden state $i$ is 
\begin{align*}
 \probs_\theta(O_1=o ,\mathcal{Z}_2 =i)=
\mathbf{1}_i^T A_{o}^\theta\mu_0, 
\end{align*}
where $\mathbf{1}_i$ denotes an $N\times 1$ one-hot vector with the $i$-th entry being assigned value $1$.
\end{proposition}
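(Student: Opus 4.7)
The plan is to prove the identity by unfolding the matrix-vector product $A_o^\theta \mu_0$ coordinate-wise and matching it against a direct marginalization of $\probs_\theta(O_1=o, \mathcal{Z}_2 = i)$ over the hidden state $\mathcal{Z}_1$. The main tool is the Markov structure of the HMM $\mathcal{M}_0^\theta$, which decouples the transition from the emission conditioned on the current hidden state.

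First I would write the right-hand side entry-wise. Using the definition of the observable operator from \eqref{eq:observable_operators_matrix} and the fact that $\mu_0[j] = \probs(\mathcal{Z}_1 = j)$, the $i$-th coordinate of $A_o^\theta \mu_0$ is
\begin{equation*}
(A_o^\theta \mu_0)[i] = \sum_{j=1}^{N} A_o^\theta[i,j]\, \mu_0[j] = \sum_{j=1}^{N} \probs_\theta(\mathcal{Z}_2 = i \mid \mathcal{Z}_1 = j)\, \mathbf{E}(O_1 = o \mid \mathcal{Z}_1 = j)\, \probs(\mathcal{Z}_1 = j).
\end{equation*}
Left-multiplying by $\mathbf{1}_i^T$ simply selects this $i$-th entry, so it remains to show that the displayed sum equals $\probs_\theta(O_1 = o, \mathcal{Z}_2 = i)$.

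Next I would expand the left-hand side by the law of total probability, marginalizing over $\mathcal{Z}_1$:
\begin{equation*}
\probs_\theta(O_1 = o, \mathcal{Z}_2 = i) = \sum_{j=1}^{N} \probs_\theta(O_1 = o, \mathcal{Z}_2 = i, \mathcal{Z}_1 = j) = \sum_{j=1}^{N} \probs_\theta(O_1 = o, \mathcal{Z}_2 = i \mid \mathcal{Z}_1 = j)\, \probs(\mathcal{Z}_1 = j).
\end{equation*}
By the HMM conditional independence structure of $\mathcal{M}_0^\theta$, given $\mathcal{Z}_1 = j$ the emission $O_1$ depends only on $\mathcal{Z}_1$ and the next hidden state $\mathcal{Z}_2$ depends only on $\mathcal{Z}_1$, so $O_1$ and $\mathcal{Z}_2$ are conditionally independent given $\mathcal{Z}_1$. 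This gives
\begin{equation*}
\probs_\theta(O_1 = o, \mathcal{Z}_2 = i \mid \mathcal{Z}_1 = j) = \probs_\theta(\mathcal{Z}_2 = i \mid \mathcal{Z}_1 = j)\, \mathbf{E}(O_1 = o \mid \mathcal{Z}_1 = j),
\end{equation*}
and substituting back matches the coordinate-wise expression for $(A_o^\theta \mu_0)[i]$, completing the identity $\probs_\theta(O_1 = o, \mathcal{Z}_2 = i) = \mathbf{1}_i^T A_o^\theta \mu_0$.

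I do not expect a real obstacle here; the content of the claim is essentially the definition of $A_o^\theta$ combined with the HMM factorization. The one point that needs care is the indexing convention: the definition in \eqref{eq:observable_operators_matrix} uses the \emph{reversed} transition matrix $T_\theta[i,j] = \probs_\theta(\mathcal{Z}_{t+1} = i \mid \mathcal{Z}_t = j)$, so the row index of $A_o^\theta$ corresponds to the next state and the column index to the current state; this is exactly what is needed for $A_o^\theta \mu_0$ to produce a vector indexed by the next hidden state, which is then selected by $\mathbf{1}_i^T$.
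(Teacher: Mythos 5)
Your proof is correct and follows essentially the same route as the paper's: both unfold $\mathbf{1}_i^T A_o^\theta \mu_0$ coordinate-wise into $\sum_j T_\theta[i,j]\,\mathbf{E}(o\mid j)\,\mu_0(j)$ and identify this with the marginalization of the joint probability over the initial hidden state. Your version merely makes explicit the conditional-independence factorization $\probs_\theta(O_1=o,\mathcal{Z}_2=i\mid \mathcal{Z}_1=j)=\probs_\theta(\mathcal{Z}_2=i\mid \mathcal{Z}_1=j)\,\mathbf{E}(O_1=o\mid \mathcal{Z}_1=j)$, which the paper uses implicitly when writing $\mathbb{P}_\theta(o_1,i\mid j)$.
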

\begin{proof}
    Consider the right-hand side and from the definition of the observable operators, we have,

\begin{equation}
\begin{aligned}
&\mathbf{1}_i^T A_{o_1}^\theta \mu_0 \\
&= \begin{bmatrix}
   \underbrace{\bm{0}}_{\in R^{1\times (i-1)}}, &  1 ,&  \underbrace{ \bm{0}}_{\in R^{1\times (N-i)}}
\end{bmatrix} \cdot \\
& \begin{bmatrix}
    T_\theta[1,1]\mathbf{E}(o_1|1) & \cdots & T_\theta[1,N]\mathbf{E}(o_1|N) \\
    T_\theta[2,1]\mathbf{E}(o_1|1) & \cdots & T_\theta[2,N]\mathbf{E}(o_1|N) \\
    \vdots & \ddots & \vdots \\
    T_\theta[N,1]\mathbf{E}(o_1|1) & \cdots & T_\theta[N,N]\mathbf{E}(o_1|N) 
\end{bmatrix}\begin{bmatrix}
    \mu_0(1)\\
    \mu_0(2)\\
    \vdots\\
    \mu_0(N)
\end{bmatrix} \\
&= \begin{bmatrix} T_\theta[i,1]\mathbf{E}(o_1|1) & \cdots & T_\theta[i,N]\mathbf{E}(o_1|N)
\end{bmatrix} \begin{bmatrix}
    \mu_0(1)\\
    \mu_0(2)\\
    \vdots\\
    \mu_0(N)
\end{bmatrix} \\
&= T_\theta[i,1]\mathbf{E}(o_1|1)\mu_0(1) + \cdots + T_\theta[i,N]\mathbf{E}(o_1|N)\mu_0(N) \\
&= \sum_{j} \mathbb{P}_\theta(o_1, i|j)\mu_0(j) = \mathbb{P}_\theta(O=o_1, \mathcal{Z}_2=i). 
\end{aligned}
\end{equation}

\end{proof}

\begin{proposition}
    The joint distribution of a sequence of observations and the arrived hidden state is given as
    \[
    \mathbb{P}_\theta (o_{1:t},\mathcal{Z}_{t+1}=i) = \mathbf{1}_i^T A_{o_{t:1}}^\theta \mu_0. 
    \]
\end{proposition}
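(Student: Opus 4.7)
The plan is to prove this generalization by induction on $t$, using the previous proposition as the base case. The intuition is that the result for $t=1$ already captures a ``one-step'' joint probability, and each additional observation simply prepends another observable operator to the product.

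For the \textbf{base case} $t=1$, the statement reduces exactly to Proposition~\ref{prop:modified_matrix_mult}, so nothing new is needed. For the \textbf{inductive step}, assume the claim holds for $t-1$, i.e., $\mathbb{P}_\theta(o_{1:t-1}, \mathcal{Z}_t = j) = \mathbf{1}_j^T A_{o_{t-1:1}}^\theta \mu_0$ for every $j \in \mathcal{Z}$. Then I would condition on the state at time $t$ and marginalize:
\begin{align*}
\mathbb{P}_\theta(o_{1:t}, \mathcal{Z}_{t+1}=i)
&= \sum_{j} \mathbb{P}_\theta\bigl(o_t, \mathcal{Z}_{t+1}=i \,\big|\, \mathcal{Z}_t=j, o_{1:t-1}\bigr)\,\mathbb{P}_\theta(o_{1:t-1}, \mathcal{Z}_t=j)\\
&= \sum_{j} \mathbb{P}_\theta\bigl(o_t, \mathcal{Z}_{t+1}=i \,\big|\, \mathcal{Z}_t=j\bigr)\,\mathbb{P}_\theta(o_{1:t-1}, \mathcal{Z}_t=j),
\end{align*}
where the second equality invokes the Markov property of the \ac{hmm}: given the current hidden state $\mathcal{Z}_t$, both the current observation $o_t$ and the next hidden state $\mathcal{Z}_{t+1}$ are conditionally independent of the past $o_{1:t-1}$.

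Next I would substitute the definition $A_{o_t}^\theta[i,j] = \mathbb{P}_\theta(\mathcal{Z}_{t+1}=i \mid \mathcal{Z}_t=j)\cdot \mathbf{E}(O_t=o_t \mid \mathcal{Z}_t=j) = \mathbb{P}_\theta(o_t, \mathcal{Z}_{t+1}=i \mid \mathcal{Z}_t=j)$, together with the inductive hypothesis, to obtain
\begin{align*}
\mathbb{P}_\theta(o_{1:t}, \mathcal{Z}_{t+1}=i)
&= \sum_j A_{o_t}^\theta[i,j]\,\mathbf{1}_j^T A_{o_{t-1:1}}^\theta \mu_0
 = \sum_j A_{o_t}^\theta[i,j]\,\bigl[A_{o_{t-1:1}}^\theta \mu_0\bigr]_j \\
&= \mathbf{1}_i^T A_{o_t}^\theta A_{o_{t-1:1}}^\theta \mu_0
 = \mathbf{1}_i^T A_{o_{t:1}}^\theta \mu_0,
\end{align*}
which is the desired identity.

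The proof is essentially bookkeeping: there is no substantial obstacle beyond correctly invoking conditional independence of $(O_t,\mathcal{Z}_{t+1})$ from $o_{1:t-1}$ given $\mathcal{Z}_t$, and recognizing the scalar $\mathbf{1}_j^T A_{o_{t-1:1}}^\theta \mu_0$ as the $j$-th entry of the column vector $A_{o_{t-1:1}}^\theta \mu_0$ so that the sum over $j$ collapses into a matrix--vector product. The only mild care needed is to keep the time indexing consistent between the observation $o_t$ (emitted at time $t$ from state $\mathcal{Z}_t$) and the next-state index $\mathcal{Z}_{t+1}$ appearing on the left-hand side.
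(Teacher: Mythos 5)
Your proof is correct and follows essentially the same route as the paper's: induction on the length of the observation sequence, marginalizing over the hidden state at the intermediate time, invoking the Markov/conditional-independence structure to factor out $A_{o_t}^\theta[i,j]$, and collapsing the sum over $j$ into the matrix--vector product $\mathbf{1}_i^T A_{o_t}^\theta A_{o_{t-1:1}}^\theta \mu_0$. No gaps; the indexing between the emitted observation and the next-state variable is handled consistently with the paper's definition of the observable operator.
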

\label{prop:generalized_joint_dist_matrix_mult}
\begin{proof} By induction.
    From Prop. \ref{prop:modified_matrix_mult}, we have for a single observation case that $\mathbb{P}_\theta(o_1,\mathcal{Z}_{2}=i)=\mathbf{1}_i^T A_{o_1} \mu_0$. Thus, the statement holds for the base case for $t=1$.

    Assume that the statement holds for $t=p$, $p>1$, \ie, 
    \[
    \probs_\theta(o_{1:p}, \mathcal{Z}_{p+1}=i) = \mathbf{1}_i^T A_{o_{p:1}}^\theta \mu_0.
    \]

    Then, in the case of $t=p+1$, by the definition of joint probabilities we have,
    \[
    \mathbb{P}_\theta(o_{1:p+1}, \mathcal{Z}_{p+2}=i)=\sum_{j=1}^N \mathbb{P}_\theta(o_{1:p+1},\mathcal{Z}_{p+1}=j, \mathcal{Z}_{p+2}=i).
    \]
    Applying the chain rule, we have,
    \[
    \mathbb{P}_\theta(o_{1:p+1}, \mathcal{Z}_{p+2}=i)=\sum_{j=1}^N T_\theta[i,j] \cdot \mathbf{E}(o_{p+1}\mid j) \cdot \mathbb{P}_\theta(o_{1:p},j).
    \]
    Therefore,   according to the induction hypothesis, we get
    \begin{align*}
 & \mathbb{P}_\theta(o_{1:p+1}, \mathcal{Z}_{p+2}=i)\\
 = & \sum_{j=1}^N \underbrace{T_\theta[i,j] \cdot \mathbf{E} (o_{p+1}\mid j) }_{A^\theta_{o_{p+1}}[i,j] } \cdot \mathbf{1}_j^T A^\theta_{o_{p:1}} \mu_0 \\
    = &  \sum_{j=1}^N A^\theta_{o_{p+1}}[i,j]\cdot \mathbf{1}_j^T A_{o_{p:1}}^\theta \mu_0\\
    = & \mathbf{1}_{i}^T A_{o_{p+1:1}}^\theta \mu_0.
    \end{align*}
   The last step is   because the step before it represents the summation of the multiplications of the $[i,j]$-th element of the matrix $A^\theta_{p+1}$ with the $j$-th element of vector $A^\theta_{o_{p:1}}\mu_0$. This summation equals to the $i$-th row of the  product of $A^\theta_{o_{p+1}}$ and $A^\theta_{o_{p:1}}\mu_0$.
Therefore, by induction, it holds for all $t$ that $\mathbb{P}_\theta(o_{1:t}, \mathcal{Z}_{t+1}=i)=\mathbf{1}_i^T A_{o_{t:1}}^\theta \mu_0$.

\end{proof}

 Given that $w_T$ takes a binary value, based on the final state of P1 being in one of the goal states, we have the following
\begin{equation}
\label{eq:conditional_wT_y}
\begin{aligned}
    &\probs_\theta(w_T=1\mid o_{1:T}) = \sum_{g\in \mathbf{G}} \frac{\probs_\theta(\mathcal{Z}_T=g, o_{1:T})}{\probs_\theta(o_{1:T})}\\
    &= \sum_{g\in \mathbf{G}} \frac{\mathbf{E}(o_T\mid \mathcal{Z}_T=g)\cdot \probs_\theta(\mathcal{Z}_T=g, o_{1:T-1})}{\probs_\theta(o_{1:T})}\\
\end{aligned}
\end{equation}

To obtain $\probs_\theta(w_T=0\mid o_{1:T})$, we simply compute $\probs_\theta(w_T=0\mid o_{1:T})= 1-\probs_\theta(w_T=1\mid o_{1:T})$. Thus, for the gradient computation, we have, $\nabla_\theta \probs_\theta(w_T=0\mid o_{1:T}) = -\nabla_\theta \probs_\theta(w_T=1\mid o_{1:T})$. Let $y=o_{1:T}$,
\begin{equation}
\label{eq:gradient_conditional_wt_y}
\begin{aligned}
    & \nabla_\theta \probs_\theta(w_T=1\mid y) \\
    & = \nabla_\theta \big[ \sum_{g\in \mathbf{G}} \frac{\mathbf{E}(o_T\mid \mathcal{Z}_T=g)\cdot \probs_\theta(\mathcal{Z}_T=g, o_{1:T-1})}{ \probs_\theta(y)} \big] \\
    & = \big[ \sum_{g\in \mathbf{G}} \mathbf{E}(o_T\mid \mathcal{Z}_T=g)\cdot \nabla_\theta\frac{ \probs_\theta(\mathcal{Z}_T=g, o_{1:T-1})}{ \probs_\theta(y)} \big].
\end{aligned}
\end{equation}

Let $N(\theta,g)= \probs_\theta(\mathcal{Z}_T = g, o_{1:T-1}) = \mathbf{1}_g^T A_{o_{T-1:1}}^\theta \mu_0$. 
Then, using the quotient rule we have,
\begin{equation}
\begin{aligned}
    \label{eq:grad_conditional_wT_y_intermediate}
     & \nabla_\theta \probs_\theta(w_T=1\mid y) \\
     & = \sum_{g\in \mathbf{G}} \mathbf{E}(o_T\mid g) \cdot \frac{\nabla_\theta N(\theta,g)\cdot \probs_\theta(y) - N(\theta,g)\cdot \nabla_\theta \probs_\theta(y)}{(\probs_\theta(y))^2}.
\end{aligned}
\end{equation}

We have $\nabla_\theta \probs_\theta(y)$ from Eq.\ref{eq:gradient_of_P_y} and to obtain the above gradient, we compute $\nabla_\theta N(\theta,g)$ similar to Eq. \ref{eq:gradient_of_P_y}.
Then, with Eq. \ref{eq:grad_conditional_wT_y_intermediate}, we obtain the gradient $\nabla_\theta \probs_\theta(w_T=1\mid y)$.

Finally, by computing $\probs_\theta(y)$, $\nabla_\theta \probs_\theta(y)$, $\probs_\theta (w_T\mid y)$ and $\nabla_\theta \probs_\theta(w_T\mid y)$, we obtain     the gradient of the conditional entropy, $\nabla_\theta H(W_T\mid Y; \theta)$ using Eq. \ref{eq:conditional_entropy_gradient_calculation}. Further, since  the set $\mathcal{O}^T$  of observations  is combinatorial and may be too large to enumerate. We can thus employ sample approximation to estimate the gradient of the conditional entropy. That is, given a set of sequence of observations $\mathcal{Y}=\{y_1,y_2, \cdots, y_V\}$ with $V$ sequences, the approximate conditional entropy is given by,
\begin{equation}    
\label{eq:approximate_entropy}
H(W_T\mid Y;\theta) \approx -\frac{1}{V} \sum_{v=1}^V \sum_{w_T\in \{0,1\}} \probs_\theta (w_T\mid y_v) \log_2 \probs_\theta(w_T\mid y_v).
\end{equation}


\begin{equation}
\label{eq:approximate_gradient_of_entropy}
\begin{aligned}[t]
& \nabla_\theta H(W_T\mid Y;\theta) 
  \approx -\frac{1}{V}\sum_{v=1}^V \sum_{w_T\in \{0,1\}} [ \log_2 \probs_\theta(w_T \mid y) \nabla_\theta \probs_\theta(w_T\mid y) \\
& \quad + \probs_\theta(w_T\mid y) \log_2 \probs_\theta(w_T \mid y) \frac{\nabla_\theta \probs_\theta(y)}{\probs_\theta(y)}   + \frac{\nabla_\theta \probs_\theta(w_T \mid y)}{\log 2} ].
\end{aligned}
\end{equation}

\section{Experimental Validation}
This section shows the effectiveness of the algorithm with two sets of experiments.
The  computations are conducted using Python and PyTorch on a Windows 10 machine with an Intel Core i7 CPU @ 3.2 GHz, 32 GB RAM, and 8 GB RTX 3060 GPU. 
\subsection{The illustrative example}
\begin{example}[Part III]
\label{ex:3}
    We now implement the primal-dual policy gradient algorithm on the illustrative example, Example \ref{ex: 1}. It is recalled that the secret states set is $\{s_4, s_6\}$.
    We employ soft-max policy parametrization, \ie,
    \begin{equation}
    \label{eq:softmax}
        \pi_\theta(\sigma\mid s) = \frac{\exp(\theta_{s, \sigma})}{\sum_{\sigma'\in \Sigma} \exp(\theta_{s, \sigma'})},
    \end{equation}
    where $\theta \in \reals^{|S \times \Sigma|}$ is the policy parameter vector. The soft-max policy has essential analytical properties including completeness and differentiability.
    
    We implemented the primal-dual policy gradient algorithm described above for $1000$ iterations and used $1500$ sample trajectories of length $T=2$ for each iteration. First, we computed the approximate conditional entropy with no masking and then with masking for two different cost budgets $\epsilon=60$ and $\epsilon=20$. The approximate conditional entropy and the approximate gradient for the conditional entropy were computed based on \eqref{eq:approximate_entropy} and \eqref{eq:approximate_gradient_of_entropy}.



   Figure~\ref{fig:running_example_conditional_entropy} shows the conditional entropy for varying budgets of masking costs ($\epsilon$). Recall that a higher entropy implies a higher level of uncertainty.  
   For comparison, we first computed the entropy of the  prior  distribution about the secret $W_T$, which is $0.9172$. This value is the upper bound on the  conditional entropy given observations. Figure~\ref{fig:running_example_conditional_entropy} shows the approximate conditional entropy for no masking drawn in black. It is observed that the conditional entropy is approximately $0.0895$. Thus, without dynamic masking, the observer on average is certain whether or not a secret state is reached.  In Figure~\ref{fig:running_example_conditional_entropy} for $\epsilon = 60$,   the algorithm converges after about $65$ iterations to an approximate conditional entropy of $0.7132$. Likewise, for $\epsilon=20$, the algorithm converges at about $75$ iterations to an approximate conditional entropy of $0.658$.  

    \begin{figure}
    \centering
    \begin{subfigure}{\linewidth}
        \centering
        \includegraphics[width=0.65\linewidth]{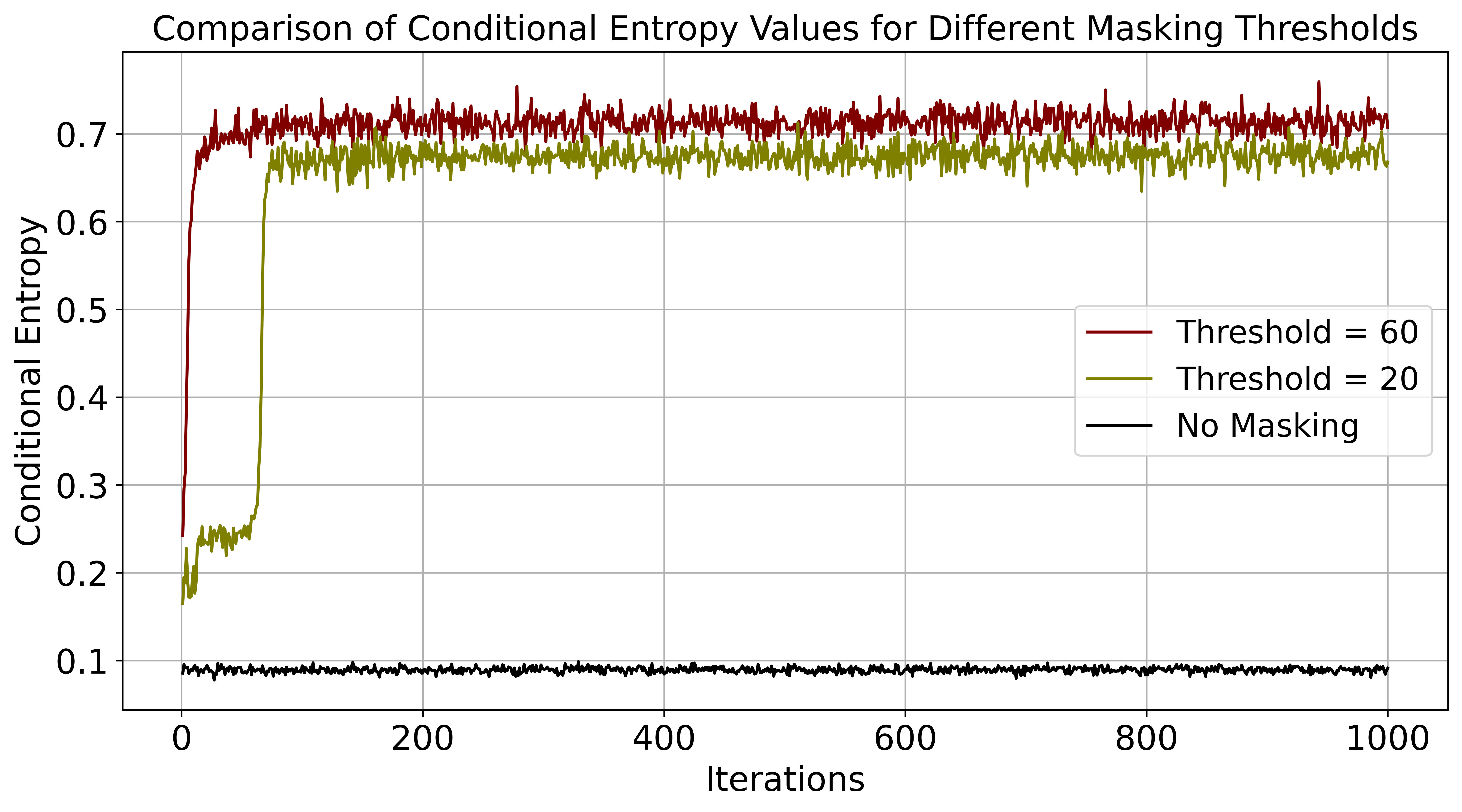}
        \caption{Comparison of conditional entropy.}
        \label{fig:running_example_conditional_entropy}
    \end{subfigure}
    \begin{subfigure}{\linewidth}
        \centering
        \includegraphics[width=0.65\linewidth]{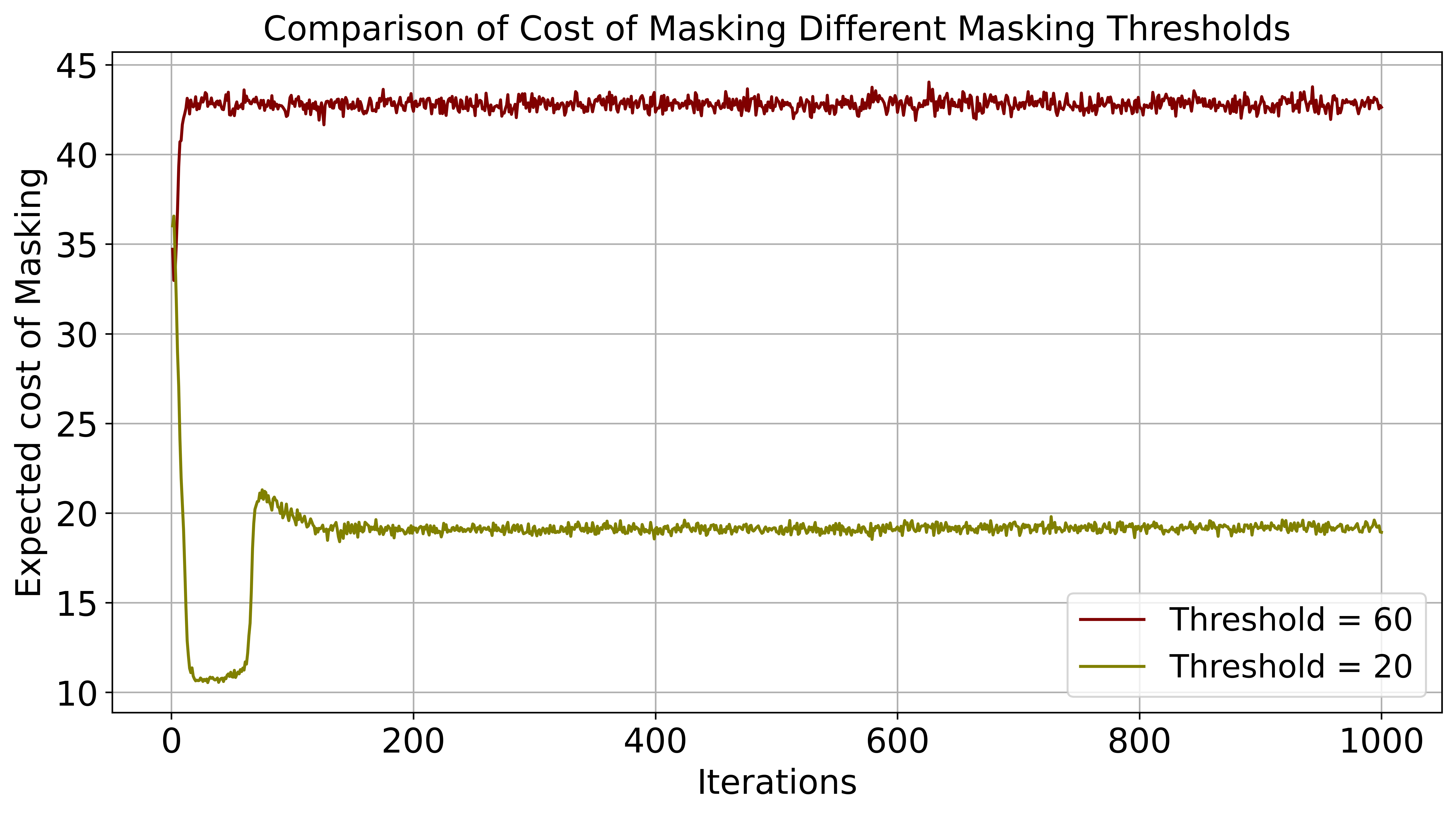}
        \caption{Comparison of the expected cost of dynamic masking.}
        \label{fig:running_example_cost_of_masking}
    \end{subfigure}
    
    \caption{Experimental validation of the illustrative example.}
    \label{fig:mainfig}
\end{figure}
    
    Figure~\ref{fig:running_example_cost_of_masking} shows the expected cost of the dynamic masking for thresholds $60$ and $20$. It can be seen that as the algorithm converges, the cost of masking converges to a value within the given threshold budget, \ie,  $42.63$ and $18.93$ for $\epsilon = 60$, and $\epsilon = 20$, respectively. The average time consumed for each iteration in both cases is $5.7$s.

    From the two comparisons, the conditional entropy improves by using dynamic masking. It is also observed that the conditional entropy decreases with a decrease in the budget, as expected.
\end{example}

\subsection{Case Study: Stochastic gridworld}

Consider a secure pharmaceutical research facility responsible for testing and ensuring the safety of various drug batches before they are released to the market represented by the $6 \times 6$ grid world as shown in Figure~\ref{fig:pharma_environment}.

\begin{figure}[h]
  \centering
  \includegraphics[width=0.4\linewidth]{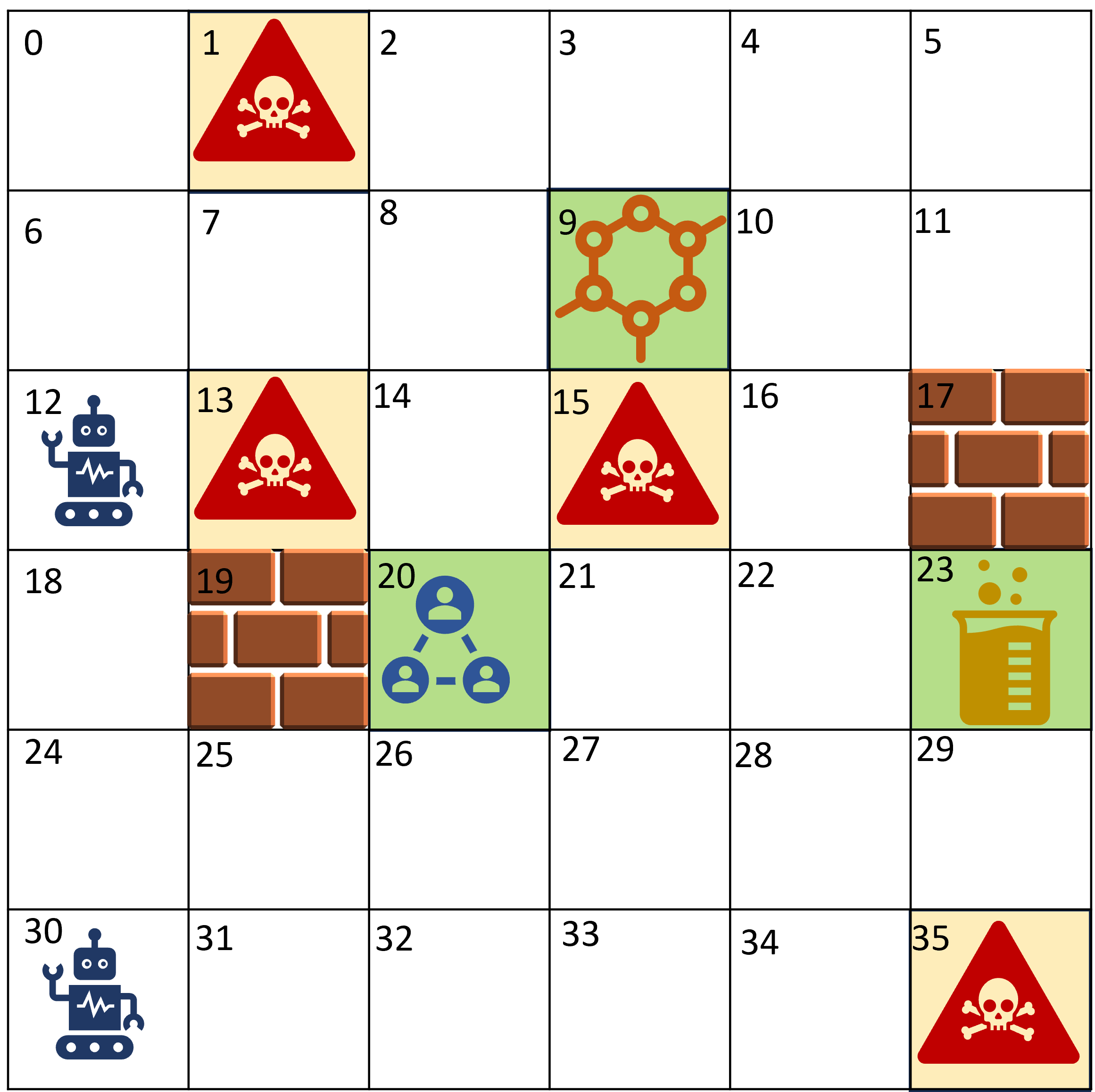}
  \caption{Gridworld depiction of a pharmaceutical research facility with cells $\{12, 30\}$ as the initial states for the robot.}
  \Description{Enter a description.}
  \label{fig:pharma_environment}
\end{figure}

In this scenario, the robot is tasked with supplying essential equipment to one of the following locations: the chemical analysis lab in cell $23$, the biological activity testing chamber in cell $9$, or the human testing and analysis lab in cell $20$. The robot also must steer clear of the bio-hazard disposal units in cells $1, 13, 15$, and $35$. The goal states of the robot, cells $9, 20, 23$, are also the secret states. Cells $17$ and $19$ are walls.
A competitor company intends to know the progress on the breakthrough drugs being designed by the research facility. Thus, it is a passive observer (P2) that observes the robot's behavior using a deployed sensor network, specifically, observes the final state of the robot. We have a masking agent (P1) in the environment that aims to maximize the final state opacity to P2. 

\paragraph*{Environment dynamics} The robot can move in one of four compass directions (North, South, East, West) within the grid world. Every action of the robot carries a degree of uncertainty. That is, when the robot moves in a specific direction, it can reach the intended cell with a probability $p$ and reach the unintended cells (\ie, if north is the intended cell, the cells in east and west are the unintended cells) with a probability $(1-p)/2$.
If the robot moved into a cell with a wall or out of the boundaries, it would remain in the current cell. For instance, if the robot moves East from cell $30$, it reaches cell $31$ with probability $p=0.8$, cell $24$ with probability $0.1$, and remains in cell $30$ with probability $0.1$.

In this example, the robot starts uniformly randomly from $12$ or $30$. The robot is aware of its state in the environment and follows a pre-planned goal policy, shown in Figure~\ref{fig:goal_policy_and_sensor_setup} using the blue arrows. Any cell with multiple arrows implies that the robot uniformly randomly chooses from one of those control actions. P2 observes partially the robot's trajectories through the sensors $A, B, C$, and $D$ deployed in the environment. The sensors cover cells $\{3, 4, 9, 10\}$, $\{21, 22, 28\}$, $\{23, 29, 35\}$, and $\{6, 7, 8, 12, 13, 14\}$ respectively as shown in Fig. \ref{fig:goal_policy_and_sensor_setup}. The sensors are considered binary sensors that return the sensor labels $A$, $B$, $C$, or $D$ if the robot is in their coverage and a null observation `0' otherwise. We consider the sensors to have a default detection probability $\beta$. Thus, sensors detect the robot in their coverage with a probability $\beta$ and not detect with a probability $1-\beta$ (\ie, the false negative probability). The false positive probability is zero for each sensor. We vary the value of $\beta$ and analyze the impact of the sensor's noise on optimizing opacity.

\begin{figure}[h]
  \centering
  \includegraphics[width=0.5\linewidth]{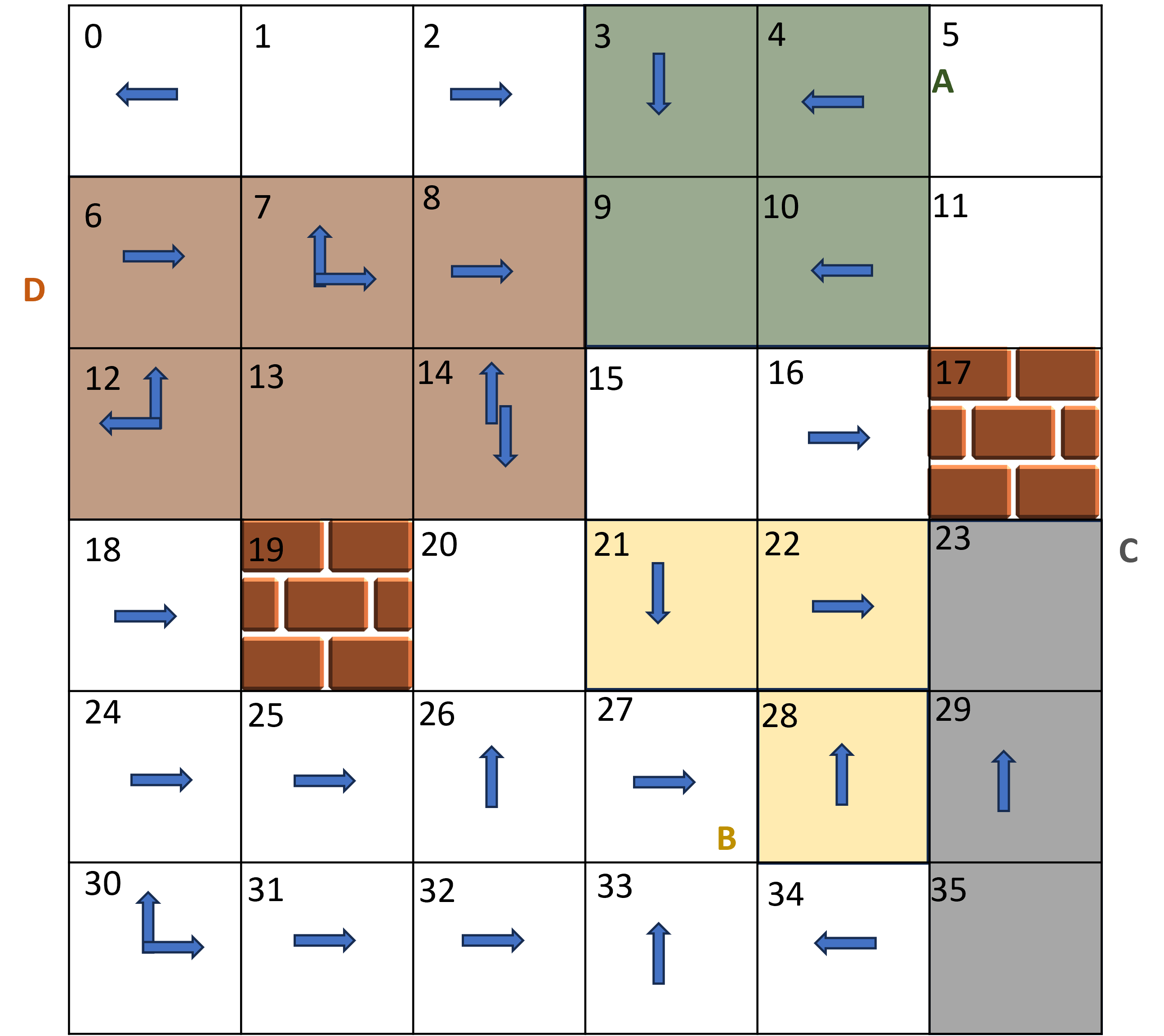}
  \caption{The sensor setup with sensors $A$ (Green), $B$ (Yellow), $C$ (Grey), and $D$ (Brown), and the robot's goal policy (Blue arrows) in the grid world environment.}
  \Description{Enter a description.}
  \label{fig:goal_policy_and_sensor_setup}
\end{figure}

\paragraph*{Experiment setup}
In our experiments, we consider the dynamic masking agent can mask any one sensor or none at any given time, \ie, $\Sigma = \{A, B, C, D, N\}$, where $N$ denotes no sensor being masked. We also consider a strong P2 to whom the dynamic masking actions are visible. Thus, the product of sensor readings and the masking actions form the set of observations. 
With the robot's goal policy, the system under partial observation is modeled as the \ac{hmm} $M$. Finally, we have the dynamic masking agent pays the cost of masking each sensor, $\phi(A) = 20$, $\phi(B) = 25$, $\phi(C) = 15$ and $\phi(D) = 10$, as follows.
\[
C(s,\sigma, \sigma') = \begin{cases}
    \phi(\sigma') &\text{if } \quad \sigma \neq \sigma', \sigma'\in \Sigma\setminus\{N\},\\
       \frac{\phi(\sigma')}{2} & \text{if } \quad \sigma = \sigma', \sigma' \in \Sigma\setminus\{N\},\\
       0& \text{ otherwise.}
\end{cases}
\]
That is, the cost of masking a previously masked sensor is considered to be half of its original cost. 
With this setup, we conduct the experiments for the threshold values $\epsilon = 70$ and $\epsilon = 35$ for varying sensor stochasticity parameters $\beta = 0.85$ and $\beta = 0.75$. We consider a finite horizon   $T=10$. Similar to Example \ref{ex:3}, we again employ the soft-max parametrization as in Eq. \ref{eq:softmax}, with $\theta$ being the policy parameter vector.
In each iteration, $1500$ trajectories of length $10$ are sampled. These trajectories are divided into $15$ batches to update $\theta$ in the primal-dual policy gradient computation.  The algorithm took about 2000 iterations to converge and each iteration of the policy gradient algorithm takes an average of $14.026$s. 

 The entropy of the prior about the final-state secret $W_T$ before receiving any observation in this environment is $0.9042$. To illustrate the effectiveness of masking, 
we compare the synthesized mask with (1) the no-masking policy, reflecting P2's uncertainty due to the environment and sensor setup, and (2) the final state masking policy, which masks the sensor covering the secret state when the probability of the robot's next state being secret is non-zero.
\paragraph*{Discussion}
For each policy, we use trajectories to compute a sample approximation of the conditional entropy under that policy as per Eq.~\ref{eq:approximate_entropy}. Table~\ref{tab:entropy_and_cost_comparison} summarizes the approximate conditional entropy values for the baseline policies and our constrained optimal masking policy for varying the values of $\beta=0.85, 0.75$ and for varying thresholds $\epsilon=70, 35$.  

\begin{figure}
    \centering
    \begin{subfigure}{\linewidth}
        \centering
        \includegraphics[width=0.65\linewidth]{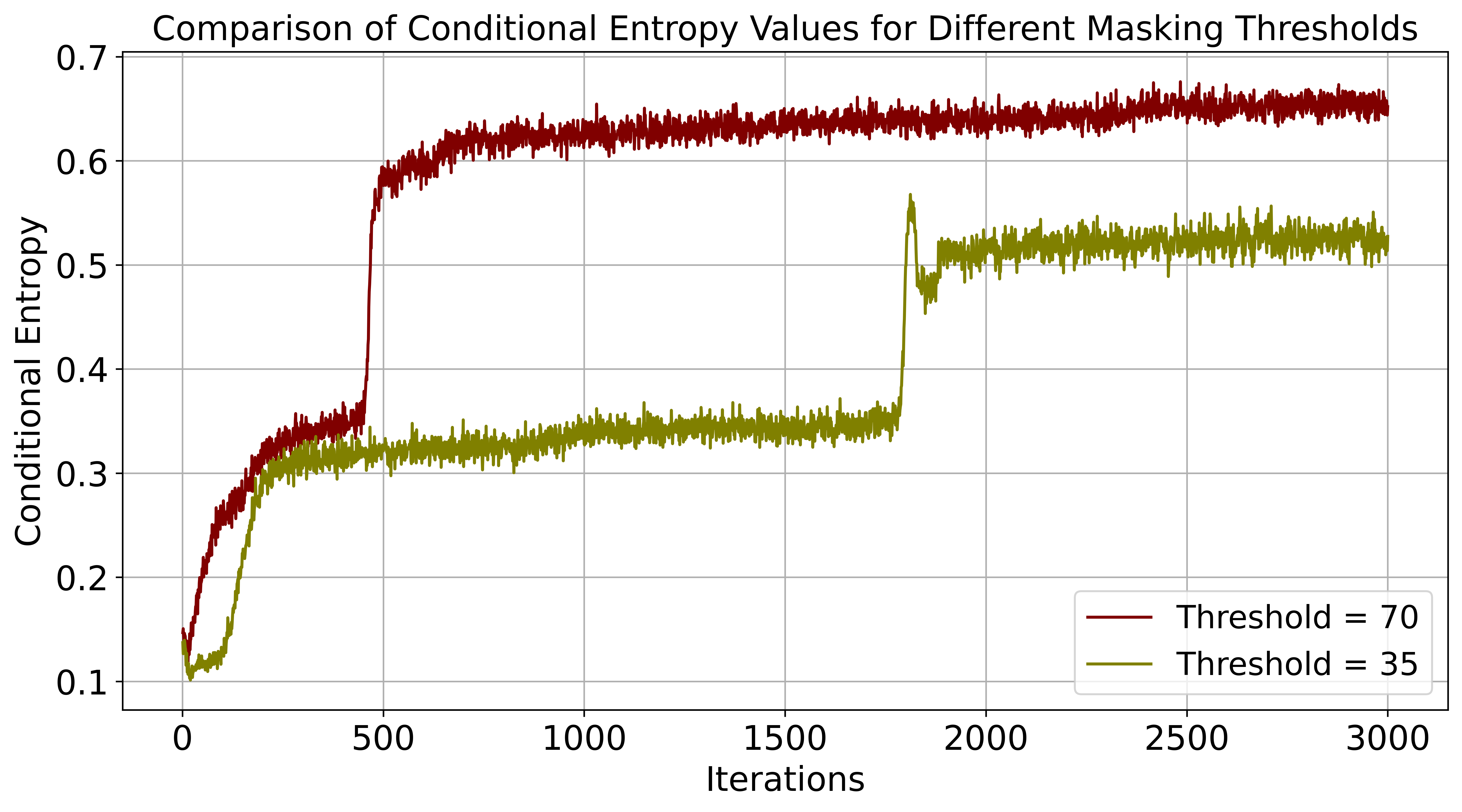}
        \caption{Comparison of conditional entropy for $\beta = 0.85$.}
        \label{fig:gridworld_example_conditional_entropy_0.85}
    \end{subfigure}
    \begin{subfigure}{\linewidth}
        \centering
        \includegraphics[width=0.65\linewidth]{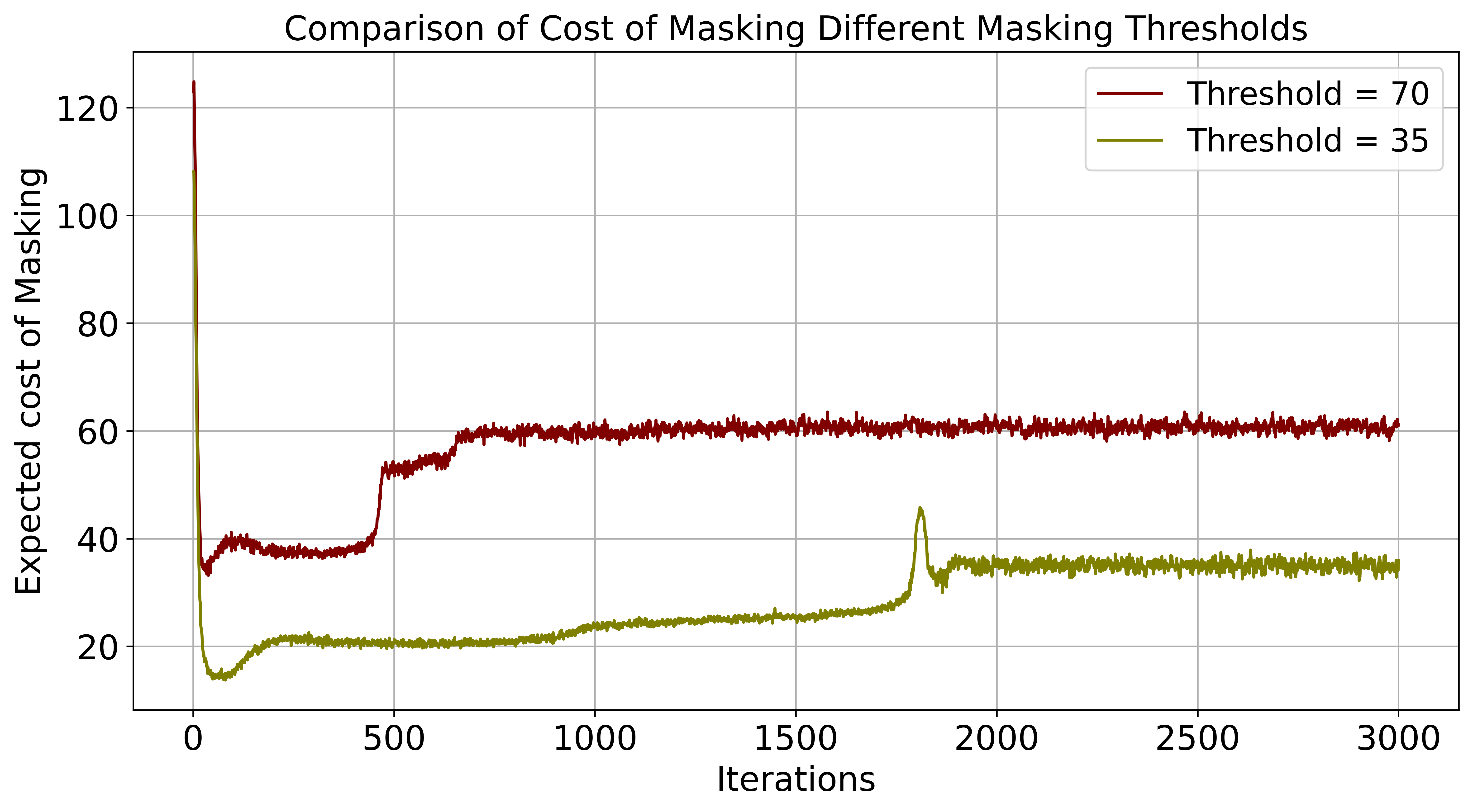}
        \caption{Comparison of the expected cost of dynamic masking ($\beta = 0.85$).}
        \label{fig:gridworld_example_cost_of_masking_0.85}
    \end{subfigure}

    \begin{subfigure}{\linewidth}
        \centering
        \includegraphics[width=0.65\linewidth]{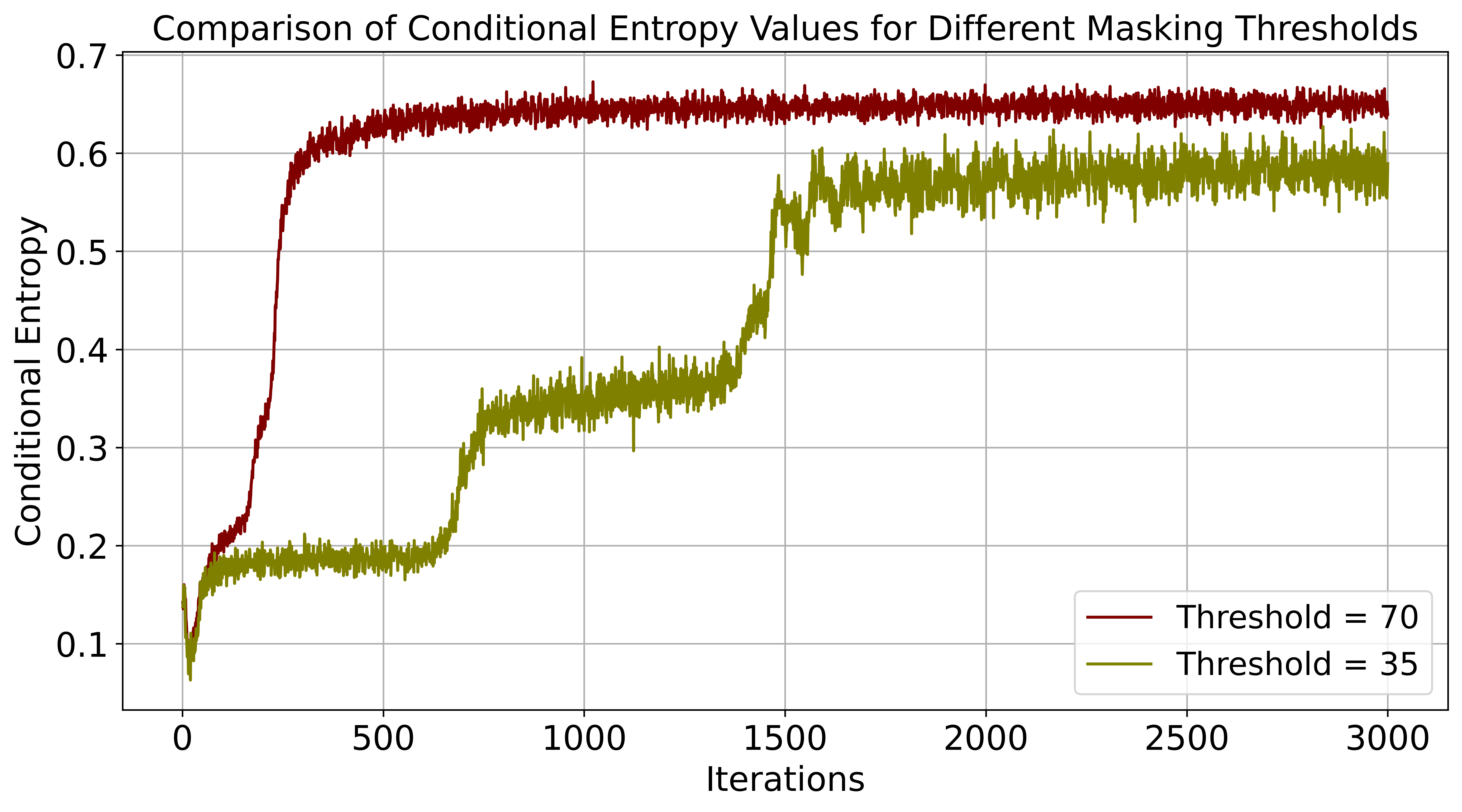}
        \caption{Comparison of conditional entropy for $\beta = 0.75$.}
        \label{fig:gridworld_example_conditional_entropy_0.75}
    \end{subfigure}

    \begin{subfigure}{\linewidth}
        \centering
        \includegraphics[width=0.65\linewidth]{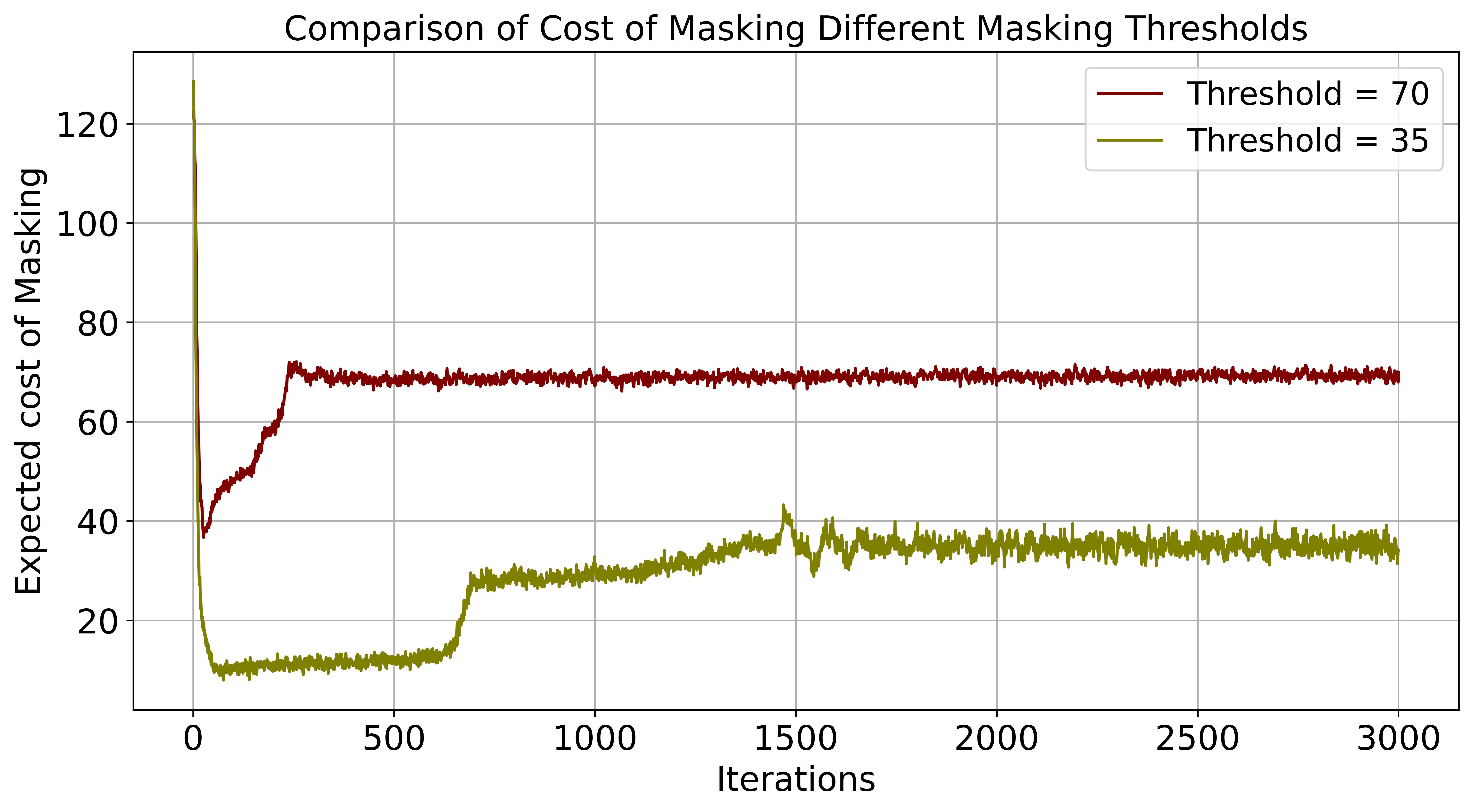}
        \caption{Comparison of the expected cost of dynamic masking ($\beta = 0.75$).}
        \label{fig:gridworld_example_cost_of_masking_0.75}
    \end{subfigure}
    
    \caption{Convergence results for the grid world example.}
    \label{fig:mainfig-2}
\end{figure}


\begin{table}[h]
    \centering
    \begin{tabular}{|c|c|c|c|}
        \hline
        Policy & $\epsilon$ & Approx. $H(W_T\mid O_{0:T})$ & Exp. cost \\
        \hline
        No-masking & - & $0.168$ ($\beta=0.85$)& $0$ \\
        No-masking & - & $0.1824$ ($\beta=0.75$) & $0$ \\
        Final state masking & - & $0.1763$ ($\beta=0.85$) & $14.37$ \\
        Final state masking & - & $0.2129$ ($\beta=0.75$) & $14.8$ \\
        Our policy  & $70$ & $0.6539$ ($\beta=0.85$) & $61.37$ \\
        Our policy  & $35$ & $0.5274$ ($\beta=0.85$) & $34.088$ \\
        Our policy  & $70$ & $0.6543$ ($\beta=0.75$) & $69.86$ \\
        Our policy  & $35$ & $0.5893$ ($\beta=0.75$) & $34.289$ \\
        \hline
    \end{tabular}
    \caption{Approximate conditional entropy and expected cost for different policies under varying $\epsilon$ and $\beta$.}
    \label{tab:entropy_and_cost_comparison}
\end{table}

Given the sensor stochasticity parameter $\beta=0.85$, Figures~\ref{fig:gridworld_example_conditional_entropy_0.85} and \ref{fig:gridworld_example_conditional_entropy_0.75} show the convergence results of primal-dual policy gradient algorithms for solving constrained optimal masking policies with thresholds $\epsilon = 70$ and $\epsilon = 35$. The conditional entropy of the policy converges to $0.6539$ and $0.5274$, respectively. Similarly, for $\beta=0.75$, it converges to $0.6543$ and $0.5893$ for $\epsilon = 70$ and $\epsilon = 35$, respectively, as shown in Figures~\ref{fig:gridworld_example_conditional_entropy_0.75} and \ref{fig:gridworld_example_cost_of_masking_0.75}. Experiments show that larger $\epsilon$ increases conditional entropy, and for a given $\epsilon$, a higher false negative rate $1-\beta$ slightly raises entropy. Table~\ref{tab:entropy_and_cost_comparison} also shows the synthesized dynamic mask outperforms both baseline policies.



To gain some insight into the synthesized dynamic mask, consider the following trajectory $30\to24\to25\to26\to20$. 


Our policies (for both $\epsilon=70$ and $\epsilon=35$) choose to mask no sensor with a high probability ($0.9915$) at the state $30$. 
At state $24$, the masking action $D$ is chosen with a probability $0.6782$, masking action $N$ with a probability $0.303$, and actions $B$, and $C$ with remaining very low probabilities. It continues to mask the sensor $D$ with probabilities $0.9851$, $0.9847$, and $0.9711$ in the states $25$, $26$, and $20$ (the other masking actions are chosen with negligible probabilities). 

These masking decisions are interesting as they show that the agent strategically exploits the fact that its masking actions are visible to the observer. Consider the state $25$. From the observer's perspective, given his null observations so far, the robot could be in any of the following states due to the environment's stochasticity: $\{30, 24, 18, 25, 31, 32\}$.
At state $25$, the robot took the control action `East' (as shown in Figure~\ref{fig:goal_policy_and_sensor_setup}), and reached state $26$ in this trajectory. After this one step, the observer believes the robot could end up in any of the states, \ie, $\{30, 24, 18, 25, 31, 32, 26, 33, 12\}$. Before receiving an observation, the observer is uncertain whether the robot is in a region not covered by any sensors or if it has moved to state $12$, which is under sensor $D$. 

Now, if the masking agent were to mask any sensor other than $D$, say $B$, the observer would receive the observation (`0', $B$). Because $D$ is not masked, it emits a ``null'' observation. This observation would immediately eliminate the possibility that the robot is in state $12$. However, as $D$ is masked,  the observer remains uncertain about whether the robot is in a state covered by sensor $D$ or in the uncovered region because the observer cannot distinguish whether the null observation is due to the masking of sensor $D$ or the robot's actual position being outside the sensor's coverage. Hence, the masking action $D$ incurs a higher level of uncertainty in the observer's belief compared to other masking actions.

Alternatively, consider a trajectory where the robot did not reach $20$ and followed the sequence $26\to 27 \to 28 \to 29$. The policy generated for $\epsilon=35$ masks sensor $D$ with probability $0.9715$ in state $27$. Then, the robot reached state $28$ and the chosen masking policy is to mask sensor $B$ with probability $0.3774$ and sensor $C$ with probability $0.328$. Similarly, when the robot is at the state $29$, it chooses to mask the sensor $C$ with a probability of $0.9984$. The policy for $\epsilon=70$ masks sensor $D$ with probability $0.6943$ in state $27$. Then, in state $28$, the preferred masking action is $B$ with probability $0.6837$ and $C$ with probability $0.2149$. 

Here, we observe how the masking cost influences the selection of masking actions. For example, at state $28$, sensor $B$ — the most expensive sensor — 
is the best choice for masking as the control action taken is `North.' It
is chosen as the masking action with a probability of $0.3774$ when the threshold is set to $35$. However, this probability increases by nearly $0.3$ when the budget is raised, highlighting the impact of the available budget on masking decisions.





\section{Conclusion and Future Work}
In this paper, we introduce conditional entropy as a measure of opacity for a stochastic system. We formulate the problem of synthesizing a budget-constrained dynamic mask policy, to maximize final-state opacity. To enforce the budget costs on masking actions, we develop a primal-dual policy gradient algorithm and derive a procedure to compute the gradient of the conditional entropy with respect to the masking policy parameters using observation operators. Our experimental results demonstrate the effectiveness of the proposed algorithm against two baseline policies, revealing interesting characteristics of the generated policy.

Our work extends information-flow security objectives to dynamical systems against passive observers. The methods can be applied to design dynamic information-releasing policies in a cyber-physical system. In future research, one promising direction is to generalize the proposed method for other classes of information-theoretic opacity properties, including language-based opacity or infinite-step opacity. Another interesting direction is to consider the trade-off between transparency and opacity, for example, the system must release information to make certain properties to be legible to the external observer while ensuring privacy and opacity of other confidential properties. 



\section*{Acknowledgements}

This work was sponsored in part by the Army Research Office and was accomplished under Grant
Number W911NF-22-1-0034 and by the Army Research Laboratory under Cooperative Agreement Number W911NF-22-2-0233, and in part by NSF under grant No. 2144113. The views and conclusions contained in this document are those of the authors and
should not be interpreted as representing the official policies, either expressed or implied, of the Army Research
Office, the Army Research Lab, or the U.S. Government. The U.S. Government is authorized to reproduce and distribute reprints for
Government purposes notwithstanding any copyright notation herein.

\bibliographystyle{ACM-Reference-Format}
\bibliography{refs}

\end{document}